\PassOptionsToPackage{unicode}{hyperref}
\PassOptionsToPackage{hyphens}{url}
\PassOptionsToPackage{dvipsnames,svgnames,x11names}{xcolor}
\documentclass[
  12pt]{article}
\usepackage{xcolor}
\usepackage{amsmath,amssymb}
\usepackage{iftex}
\ifPDFTeX
  \usepackage[T1]{fontenc}
  \usepackage[utf8]{inputenc}
  \usepackage{textcomp} 
\else 
  \usepackage{unicode-math} 
  \defaultfontfeatures{Scale=MatchLowercase}
  \defaultfontfeatures[\rmfamily]{Ligatures=TeX,Scale=1}
\fi
\usepackage{lmodern}
\ifPDFTeX\else
\fi
\IfFileExists{upquote.sty}{\usepackage{upquote}}{}
\IfFileExists{microtype.sty}{
  \usepackage[]{microtype}
  \UseMicrotypeSet[protrusion]{basicmath} 
}{}
\makeatletter
\@ifundefined{KOMAClassName}{
  \IfFileExists{parskip.sty}{%
    \usepackage{parskip}
  }{
    \setlength{\parindent}{0pt}
    \setlength{\parskip}{6pt plus 2pt minus 1pt}}
}{
  \KOMAoptions{parskip=half}}
\makeatother
\makeatletter
\ifx\paragraph\undefined\else
  \let\oldparagraph\paragraph
  \renewcommand{\paragraph}{
    \@ifstar
      \xxxParagraphStar
      \xxxParagraphNoStar
  }
  \newcommand{\xxxParagraphStar}[1]{\oldparagraph*{#1}\mbox{}}
  \newcommand{\xxxParagraphNoStar}[1]{\oldparagraph{#1}\mbox{}}
\fi
\ifx\subparagraph\undefined\else
  \let\oldsubparagraph\subparagraph
  \renewcommand{\subparagraph}{
    \@ifstar
      \xxxSubParagraphStar
      \xxxSubParagraphNoStar
  }
  \newcommand{\xxxSubParagraphStar}[1]{\oldsubparagraph*{#1}\mbox{}}
  \newcommand{\xxxSubParagraphNoStar}[1]{\oldsubparagraph{#1}\mbox{}}
\fi
\makeatother

\usepackage{longtable,booktabs,array}
\usepackage{calc} 
\usepackage{etoolbox}
\makeatletter
\patchcmd\longtable{\par}{\if@noskipsec\mbox{}\fi\par}{}{}
\makeatother
\IfFileExists{footnotehyper.sty}{\usepackage{footnotehyper}}{\usepackage{footnote}}
\makesavenoteenv{longtable}
\usepackage{graphicx}
\makeatletter
\newsavebox\pandoc@box
\newcommand*\pandocbounded[1]{
  \sbox\pandoc@box{#1}%
  \Gscale@div\@tempa{\textheight}{\dimexpr\ht\pandoc@box+\dp\pandoc@box\relax}%
  \Gscale@div\@tempb{\linewidth}{\wd\pandoc@box}%
  \ifdim\@tempb\p@<\@tempa\p@\let\@tempa\@tempb\fi
  \ifdim\@tempa\p@<\p@\scalebox{\@tempa}{\usebox\pandoc@box}%
  \else\usebox{\pandoc@box}%
  \fi%
}
\def\fps@figure{htbp}
\makeatother

\providecommand{\tightlist}{%
  \setlength{\itemsep}{0pt}\setlength{\parskip}{0pt}}

\usepackage[]{natbib}
\usepackage{mathtools,todonotes,bm, bbm}

\mathtoolsset{showonlyrefs,mathic=true}

\newcommand{\E}{\mathbb{E}}
\makeatletter
\@ifpackageloaded{caption}{}{\usepackage{caption}}
\AtBeginDocument{%
\ifdefined\contentsname
  \renewcommand*\contentsname{Table of contents}
\else
  \newcommand\contentsname{Table of contents}
\fi
\ifdefined\listfigurename
  \renewcommand*\listfigurename{List of Figures}
\else
  \newcommand\listfigurename{List of Figures}
\fi
\ifdefined\listtablename
  \renewcommand*\listtablename{List of Tables}
\else
  \newcommand\listtablename{List of Tables}
\fi
\ifdefined\figurename
  \renewcommand*\figurename{Figure}
\else
  \newcommand\figurename{Figure}
\fi
\ifdefined\tablename
  \renewcommand*\tablename{Table}
\else
  \newcommand\tablename{Table}
\fi
}
\@ifpackageloaded{float}{}{\usepackage{float}}
\floatstyle{ruled}
\@ifundefined{c@chapter}{\newfloat{codelisting}{h}{lop}}{\newfloat{codelisting}{h}{lop}[chapter]}
\floatname{codelisting}{Listing}

\usepackage{amsthm}
\theoremstyle{plain}
\newtheorem{theorem}{Theorem}[section]
\theoremstyle{plain}
\newtheorem{lemma}{Lemma}[section]
\theoremstyle{definition}
\newtheorem{definition}{Assumption}[section]
\theoremstyle{plain}
\newtheorem{corollary}{Corollary}[section]
\theoremstyle{remark}
\AtBeginDocument{}

\makeatother
\makeatletter
\@ifpackageloaded{caption}{}{\usepackage{caption}}
\@ifpackageloaded{subcaption}{}{\usepackage{subcaption}}
\makeatother
\usepackage{bookmark}
\IfFileExists{xurl.sty}{\usepackage{xurl}}{} 
\hypersetup{
  pdftitle={Anomaly detection using surprisals},
  pdfauthor={Rob J Hyndman; David T Frazier},
  pdfkeywords={Extreme value theory, Generalized Pareto
Distribution, Outlier detection, Tail bounds},
  colorlinks=true,
  linkcolor={blue},
  filecolor={Maroon},
  citecolor={Blue},
  urlcolor={Blue},
  pdfcreator={LaTeX via pandoc}}

\begin{document}

\def\spacingset#1{\renewcommand{\baselinestretch}%
{#1}\small\normalsize} \spacingset{1}


\title{\bf Anomaly detection using surprisals}
\author{
Rob J Hyndman\\
Department of Econometrics \& Business Statistics, Monash University\\
and\\David T Frazier\\
Department of Econometrics \& Business Statistics, Monash University\\
}
\maketitle

\bigskip
\bigskip
\begin{abstract}
Anomaly detection methods are widely used but often rely on ad hoc rules
or strong assumptions, and they often focus on tail events, missing
``inlier'' anomalies that occur in low-density gaps between modes. We
propose a unified framework that defines an anomaly as an observation
with unusually low probability under a (possibly misspecified) model.
For each observation we compute its surprisal (the negative log
generalized density) and define an anomaly score as the probability of a
surprisal at least as large as that observed. This reduces anomaly
detection for complex univariate or multivariate data to estimating the
upper tail of a univariate surprisal distribution. We develop two
model-robust estimators of these tail probabilities: an empirical
estimator based on the observed surprisal distribution and an
extreme-value estimator that fits a Generalized Pareto Distribution
above a high threshold. For the empirical method we give conditions
under which tail ordering is preserved and derive finite-sample
confidence guarantees via the Dvoretzky--Kiefer--Wolfowitz inequality.
For the GPD method we establish broad tail conditions ensuring classical
extreme-value behavior. Simulations and applications to French mortality
and Test-cricket data show the approach remains effective under
substantial model misspecification.
\end{abstract}

\noindent%
{\it Keywords:} Extreme value theory, Generalized Pareto
Distribution, Outlier detection, Tail bounds
\vfill

\newpage
\spacingset{1.9} 

\section{Introduction}\label{introduction}

Anomaly detection methods are often rather ad hoc and lack a solid
theoretical foundation, or they make strong distributional assumptions
that may not hold in practice. We aim to address this by defining
anomalies as observations of low probability under a specified model,
and proposing a general framework for anomaly detection that estimates
the probability mass associated with observations of equal or lower
density (thus more anomalous). Our approach allows for a unified
treatment of anomaly detection that can be applied to any anomaly
detection problem, provided a probability distribution can be defined on
the space of observations. It allows for univariate and multivariate
settings, and can be applied to both unconditional and conditional
distributions. Although it requires a specified probability
distribution, it is robust to substantial mis-specification of that
distribution.

The basic idea is to calculate the surprisal (equal to the negative log
density) of each observation, and then estimate the tail probability
from the surprisal distribution. This allows us to estimate the
probability of obtaining a surprisal at least as large as that of the
observation, which can then be used to flag anomalies. The surprisal
approach to anomaly detection provides a principled way to control the
false positive rate when flagging anomalies, by setting a threshold on
the estimated tail probabilities.

Let \(\bm{y}_1,\dots,\bm{y}_n\) be iid observations from a probability
distribution \(F\) defined on an appropriate sample space, and let \(f\)
be the generalized probability density function of \(F\). That is, \(f\)
is defined using Dirac delta functions at jump points of \(F\), so that
it represents the density function if \(F\) is continuous, the
probability mass function if \(F\) is discrete, and a combination of
both if \(F\) is a mixture of continuous and discrete components.

The \emph{surprisal} of observation \(\bm{y}_i\) is defined as
\citep{Stone2022} \[
s_i=-\log f(\bm{y}_i).
\] While these are better known as ``log scores'' in statistics
\citep{Good1952-zw}, in this context we prefer the term ``surprisal''
\citep[coined by][]{Tribus61}, as it captures the idea that observations
with low density are more ``surprising'' (and therefore potentially
anomalous) than those with high density. A large value of \(s_i\)
indicates that \(\bm{y}_i\) is an unlikely value, and so is a potential
anomaly. On the other hand, typical values will have low surprisals. In
information theory, the average surprisal is known as the entropy of a
random variable \citep{Cover2006, Stone2022} and provides a measure of
uncertainty. If \(f\) is obtained from a parametric model, then the sum
of the surprisals is equal to minus the log likelihood of the data.

Let \(G(s) = \Pr\{S\le s\}\) be the distribution function of the
surprisal values, where \(S = -\log f(\bm{Y})\) and \(\bm{Y}\sim F\).
Then the probability of observing a value with surprisal at least as
large as \(s\) is given by \(\Pr(S\ge s) = 1-G(s^-)\), where
\(G(s^-) = \lim_{t\uparrow s} G(t)\). This leads to a natural way of
assigning an anomaly score to each observation \(\bm{y}_i\): \[
p_i = \Pr(S \ge s_i)  = 1-G(s_i^-),
\] where \(s_i = -\log f(\bm{y}_i)\). We define anomalies as
observations with \(p_i < \alpha\), giving a false positive rate of
\(\alpha\) if all observations are from \(F\). This is preferable to
directly computing the probabilities from the tail of \(F\), as it
allows for anomalies to occur anywhere in the distribution of \(F\), not
just in its tails. For example, anomalies may lie between modes in a
multimodal distribution, or in otherwise low-density regions of the
distribution \(F\). It is also preferable to a distance-based approach
to anomaly detection \citep[e.g.,][]{hdoutliers}, as it naturally
handles skewed and heavy-tailed distributions.

The surprisal approach converts the problem of working with \(F\), a
potentially multivariate distribution (or even one on a manifold or some
non-Euclidean space), to working with a univariate distribution \(G\)
defined on \(\mathbb{R}\). We can write \[
G(s) = \Pr\{-\log f(\bm{Y})\le s\} = \Pr\{ f(\bm{Y})\ge e^{-s}\}.
\] This makes clear that the surprisal probability is related to the
highest density regions of the distribution \citep{HDR96}. In fact,
every observation with \(p_i< \alpha\) lies outside the
\(100(1-\alpha)\)\% highest density region of \(F\).

If \(F = N(\mu, \sigma^2)\), then \(p_i = 2[1-\Phi(|y_i-\mu|/\sigma)]\),
where \(\Phi\) is the standard Normal distribution function. In this
case, identifying anomalies as observations with \(p_i < \alpha\) is
equivalent to a z-score test with the threshold set to
\(\Phi^{-1}(1-\alpha/2)\). Hence, our approach can be seen as a
generalization of anomaly detection methods based on z-scores including
Grubbs' test \citep{Grubbs1950}.

It can also be seen as a generalization of other anomaly detection
methods. For example, the Hampel identifier, due to a proposal of Frank
Hampel \citep{Davies1993}, is designed to identify anomalies in a time
series based on whether an observation is very different from the
neighboring observations. It denotes an observation as an anomaly if
\(|y_t-m_t|/a_t > \tau\), where \begin{align*}
m_t &= \text{median}(y_{t-h}, \dots, y_{t+h}), \\
\text{and}\qquad
a_t &= \text{median}(|y_{t-h}-m_t|, \dots, |y_{t+h}- m_t|).
\end{align*} The threshold \(\tau\) is often set under the assumption of
a Normal distribution. When we take a surprisal perspective, this is
equivalent to setting \begin{equation}\phantomsection\label{eq-hampel}{
  F = N(m_t, a^2_t/\Phi^{-2}(0.75)).
}\end{equation} The false anomaly rate is
\(\alpha = 2[1 - \Phi(\tau/\Phi^{-1}(0.75))]\).

We will usually not know \(F\), and so we need to estimate it from data
or from theory. Such estimates are particularly fragile or uncertain in
areas of low density, which is precisely where we are looking for
anomalies. Further, we are seeking a method that does not make strong
distributional assumptions.

Our approach begins with an assumed \(F\), which may be estimated from
data, or obtained from theory. This distribution is used to compute the
surprisal values \(\{s_1,\dots,s_n\}\). Then, we estimate the tail
probabilities \(\{p_1,\dots,p_n\}\) from the surprisal values, using one
of three approaches.

\begin{enumerate}
\def\labelenumi{(\arabic{enumi})}
\tightlist
\item
  Using the assumed distribution \(F\) to compute the tail
  probabilities, i.e.,
  \(p_i = \int I\{f(\bm{y})\le e^{-s_i}\} d\bm{y}\), where \(I\) is the
  indicator function;
\item
  Using the empirical distribution function of the observed surprisals
  \(\{s_1,\dots,s_n\}\), so that \(p_i\) is the proportion of surprisals
  at least as large as \(s_i\);
\item
  Fitting a Generalized Pareto Distribution (GPD) to the largest
  surprisal values, and using this to estimate the tail probabilities.
\end{enumerate}

Obviously approach (1) makes the strong assumption that \(F\) is an
accurate representation of the data. Approaches (2) and (3) allow for
considerable mis-specification of the distribution used to compute the
surprisals. Approach (2) requires only that the ordering of the
surprisals in the tail is similar to what would have been obtained under
the true distribution, while approach (3) requires that extreme value
theory applies to the surprisal distribution.

In Section~\ref{sec-empirical}, we develop the necessary theory to
support the use of empirical surprisal probabilities, and identify the
conditions under which this will give a good estimate of the true
surprisal probabilities. Similarly, Section~\ref{sec-gpd} provides the
theoretical results needed to use a Generalized Pareto Distribution to
approximate the tail of the surprisal distribution. While we present the
main results in terms of iid data, Section~\ref{sec-conditional} shows
that the methods can be applied to conditional distributions as well. We
demonstrate the methods with several simulation experiments in
Section~\ref{sec-experiments}, and in two applications in
Section~\ref{sec-applications}. A discussion follows in
Section~\ref{sec-discussion}. The appendix contains the proofs needed
for the results in Sections \ref{sec-empirical} and \ref{sec-gpd}.

\section{Empirical surprisal probabilities}\label{sec-empirical}

Under approach (2) above, we estimate \(p_i=\Pr(S \ge s_i)\) as the
proportion of observed surprisals at least as large as \(s_i\). Using
this approach, the anomaly boundary for a threshold of \(\alpha\)
corresponds to the \(100(1-\alpha)\)\% highest density region computed
using the quantile density approach of \citet{HDR96}.

Let \(F(\cdot)\) denote the true distribution function of \(\bm{Y}\),
and suppose we use \(\widehat{F}\) to model \(\bm{Y}\). The
corresponding generalized density functions will be denoted by
\(f(\cdot)\) and \(\widehat{f}(\cdot)\) respectively. For the empirical
approach to be accurate, the ordering of large surprisal values under
the fitted model must match that under the true distribution. This is
equivalent to requiring that the true and fitted surprisals are related
by a strictly increasing transformation on the upper tail.

Recall that the surprisal under the true distribution is defined as
\(S=-\log f(\bm{Y})\), and recall that \(G(s)=\Pr(S\le s)\) denotes the
CDF of \(S\) at \(s\in\mathbb{S}\subseteq\mathbb{R}\), where
\(\mathbb{S}\) denotes the support of the surprisals. Likewise, let
\(\widehat{S}=-\log \widehat{f}(\bm{Y})\) denote the surprisal under the
fitted distribution, and let
\(\widehat{\mathbb{P}}_n(s):=n^{-1}\sum_{i=1}^{n}1[\widehat{S}_i\le s]\)
denote the empirical CDF of the fitted sample
\(\widehat{S}_1,\dots,\widehat{S}_n\). For the empirical estimator
\(\widehat{\mathbb{P}}_n(s)\) to provide an accurate estimate of the
upper tail of \(G(s)\) in any given sample, the following condition
turns out to be both necessary and sufficient.

\begin{definition}[]\protect\hypertarget{def-empirical_pis}{}\label{def-empirical_pis}

There exists an \(s_\star\) and a function \(h(\cdot)\) that is strictly
increasing on \([s_\star, \infty)\), such that \(S=h(\widehat{S})\)
almost surely for all \(\widehat{S} \ge s_\star\).

\end{definition}

\begin{lemma}[]\protect\hypertarget{lem-confident}{}\label{lem-confident}

Let \(S_1,\dots, S_n\) be iid with distribution function \(G\). Then,
for any \(\alpha \in(0,1)\), and
\(\epsilon = \sqrt{\log(2/\alpha)/(2n)}\), we have for all
\(s \ge s_\star\),
\(\widehat{\mathbb{P}}_n(s)-\epsilon \leq G(s) \leq \widehat{\mathbb{P}}_n+\epsilon\)
with probability at least \(1-\alpha\) if and only if
Assumption~\ref{def-empirical_pis} is satisfied.

\end{lemma}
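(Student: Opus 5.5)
The plan is to read the statement through the identification supplied by Assumption~\ref{def-empirical_pis}. In the tail, the empirical quantity $\widehat{\mathbb{P}}_n(s)$ evaluated at a fitted surprisal $s\ge s_\star$ is to be compared with $G$ at the corresponding true surprisal $h(s)$. I will show that sufficiency reduces to the Dvoretzky--Kiefer--Wolfowitz (DKW) inequality, and that necessity follows from a counting argument once $\epsilon\to 0$.

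\emph{Sufficiency.} Suppose Assumption~\ref{def-empirical_pis} holds.
\begin{enumerate}
\item For $s\ge s_\star$, strict monotonicity of $h$ on $[s_\star,\infty)$ together with $S_i=h(\widehat S_i)$ a.s.\ gives the event identity $\{\widehat S_i> s\}=\{S_i> h(s)\}$ almost surely. Both events force the relevant variable into the region where $h$ is defined and increasing.
\item Hence the upper-tail counts agree: $1-\widehat{\mathbb{P}}_n(s)=1-\mathbb{G}_n(h(s))$ a.s., where $\mathbb{G}_n$ is the empirical CDF of the iid true surprisals $S_1,\dots,S_n$.
\item The DKW inequality with Massart's constant gives $\Pr\{\sup_t|\mathbb{G}_n(t)-G(t)|>\epsilon\}\le 2e^{-2n\epsilon^2}$. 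Setting this equal to $\alpha$ yields exactly $\epsilon=\sqrt{\log(2/\alpha)/(2n)}$.
\item On the complementary event, which has probability at least $1-\alpha$, we get $|\widehat{\mathbb{P}}_n(s)-G(h(s))|\le\epsilon$ simultaneously for all $s\ge s_\star$. Under the identification of $s$ with $h(s)$, this is the claimed band.
\end{enumerate}
One technical point needs care: atoms of $G$ and the left limits $G(s^-)$. Since $h$ is strictly increasing, jumps map to jumps, so working with right-continuous CDFs throughout should be harmless.

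\emph{Necessity.} This is the step I expect to be the main obstacle, because "for any $\alpha$" and "for all $n$" must be used to turn a population-level failure into a sample-level violation. Suppose that for every $s_\star$ no strictly increasing $h$ exists on the tail.
\begin{enumerate}
\item Then the tail ordering of $\widehat S$ disagrees with that of $S$ on a set of positive probability. Concretely, I would construct thresholds $s\ge s_\star$ and $t$ for which the sets $\{\widehat S>s\}$ and $\{S>t\}$ differ by a set of probability $c>0$, for every choice of $t$ matched to $s$.
\item To get this, note that if no such $t$ existed for any $s$, then the upper level sets of $\widehat S$ would be (a.s.) upper level sets of $S$. That nesting defines a nondecreasing map $h(s)=\inf\{t:\Pr(\{\widehat S>s\}\triangle\{S>t\})=0\}$.
\item Strictness of $h$ should follow because a flat piece would merge distinct fitted tail levels into one true level. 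I would either show such a flat piece also produces a mismatch, or argue that the assumption as intended concerns ordering and treat ties separately. This is the delicate part.
\item With the mismatch in hand, the strong law gives that the empirical tail proportion from the fitted surprisals converges to a limit differing from the true tail probability by at least $c$. Meanwhile $\epsilon=\sqrt{\log(2/\alpha)/(2n)}\to0$.
\item So for $n$ large, the band fails with probability tending to one, which is greater than $\alpha$. This contradicts the uniform-in-$s\ge s_\star$ guarantee.
\end{enumerate}
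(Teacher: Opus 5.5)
Your necessity argument breaks at step 4, and the problem is structural rather than the strictness issue you flagged in step 3. From $\Pr(\{\widehat S>s\}\triangle\{S>t\})\ge c$ nothing follows about $|\Pr(\widehat S>s)-\Pr(S>t)|$, because two events can differ on a set of positive probability and still have equal probabilities. The strong law only gives $\widehat{\mathbb P}_n(s)\to\widehat G(s):=\Pr(\widehat S\le s)$, which depends on the \emph{marginal} law of $\widehat S$. Assumption~\ref{def-empirical_pis}, by contrast, is a property of the \emph{joint} law of $(\widehat S,S)$. A band whose event depends only on $\widehat S_1,\dots,\widehat S_n$, and whose target is deterministic, cannot detect misordering.

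In your matched-threshold reading the converse is in fact false, even if the band is required for every $n$. The $\widehat S_i$ are themselves iid, so DKW applied to them gives $\sup_s|\widehat{\mathbb P}_n(s)-\widehat G(s)|\le\epsilon$ with probability at least $1-\alpha$, with no assumption linking $\widehat S$ to $S$. When $G$ is continuous, the quantile match $t(s)=G^{-1}(\widehat G(s))$ satisfies $G(t(s))=\widehat G(s)$, so the matched band holds. For instance, take $\bm{Y}\sim N(0,I_2)$ fitted by $N(0,\mathrm{diag}(1,4))$. This gives $S=\log(2\pi)+(Y_1^2+Y_2^2)/2$ and $\widehat S=\log(4\pi)+Y_1^2/2+Y_2^2/8$. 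Here $t$ is strictly increasing and the matched band holds. Yet $S$ is not a function of $\widehat S$ on any tail: the points $(a,0)$ and $(0,2a)$ share $\widehat S$ but not $S$.

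A converse can only hold for a band tied to individual observations, such as $\max_{i:\widehat S_i\ge s_\star}|\widehat{\mathbb P}_n(\widehat S_i)-G(S_i)|\le\epsilon$. With continuous $G$ and $\widehat G$, a tail coupling that is not comonotone gives $\Pr\{|\widehat G(\widehat S)-G(S)|>\delta,\ \widehat S\ge s_\star\}>0$ for some $\delta>0$. Your ``some observation eventually lands there while $\epsilon\to0$'' argument then goes through. The paper's own ``only if'' does not supply this link either: it starts from the sample identity $\widehat{\mathbb P}_n=\mathbb P_n$ on $[s_\star,\infty)$ rather than from the band, and produces only a bijection between sample values.

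Your sufficiency half is right to compare $\widehat{\mathbb P}_n(s)$ with $G(h(s))$ rather than $G(s)$. The paper instead asserts $\widehat{\mathbb P}_n(s)=\mathbb P_n(s)$ on $[s_\star,\infty)$ by equating $1[h(\widehat S_i)\le h(s)]$ with $1[S_i\le s]$, which requires $h(s)=s$. For example, with $Y\sim N(0,1)$ and a fitted $N(0,\sigma^2)$, $\sigma\neq1$, the assumption holds with an affine $h$, but $\widehat{\mathbb P}_n(s)\to G(h(s))$, which differs from $G(s)$.

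However, your step 1 overreaches. The assumption constrains $S$ only on $\{\widehat S\ge s_\star\}$, so $\{S_i>h(s)\}$ does not force $\widehat S_i\ge s_\star$. You only get $\{\widehat S_i>s\}\subseteq\{S_i>h(s)\}$, hence $\widehat{\mathbb P}_n(s)\ge\mathbb G_n(h(s))$, which yields only the half $G(h(s))\le\widehat{\mathbb P}_n(s)+\epsilon$. Here is a counterexample to the other half:
\begin{itemize}
\item Take $f=\phi$, the $N(0,1)$ density, and $\widehat f=\tfrac12\phi+\tfrac12 u$ with $u$ the $\mathrm{Unif}[2.9,3.1]$ density, and set $s_\star=0$.
\item Then $\{\widehat S\ge0\}=\{Y\notin[2.9,3.1]\}$, and $S=\widehat S-\log 2$ there, so the assumption holds.
\item Yet $\widehat G(s)-G(h(s))=\Pr\{Y\in[2.9,3.1]\}\approx 9\times10^{-4}$ for $0\le s<5.8$.
\item So $\widehat{\mathbb P}_n(s)-\epsilon\le G(h(s))$ fails with probability tending to one as $n$ grows.
\end{itemize}
You need the two-sided tail correspondence $\{S>h(s_\star)\}\subseteq\{\widehat S\ge s_\star\}$ a.s. as an extra hypothesis; with it, your steps 2--4 are correct.
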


Lemma~\ref{lem-confident} implies that if the fitted distribution has
tails that are similarly located to the actual data distribution, then
the fitted empirical surprisal probabilities are very accurate measures
of the true surprisal probability, \(G(s)\). Critically, this does not
require that the tails of the fitted distribution are the same as the
actual data distribution, and only requires that -- beyond a threshold
value \(s_\star\) -- the two are related through a monotone
transformation. That is, for this approach to be accurate, we must be
confident that the fitted distribution has regions of low density in the
same locations as the true data distribution.

Importantly, Assumption~\ref{def-empirical_pis} is not satisfied by all
pairs of distributions: if the true data has skewness but the fitted
model assumes symmetry, or if there is bimodality in the observed data
that is not modeled, then Assumption~\ref{def-empirical_pis} my not be
satisfied.

\section{Extreme value theory for the surprisal
distribution}\label{sec-gpd}

Our third approach uses extreme value theory applied to the (univariate)
surprisal values. We will show that under weak conditions on the tail of
the surprisal distribution, the maximum surprisal converges to one of
the three types of extreme value distributions. This justifies fitting a
Generalized Pareto Distribution (GPD) to the largest surprisal values,
and using it to estimate tail probabilities.

An important precursor to this work is \citet{lookout} who proposed
estimating \(F\) by a leave-one-out kernel density estimate, and then
fitting a Generalized Pareto Distribution (GPD) to the most extreme
surprisal values. The underlying idea is that even if \(F\) is not
particularly well-estimated, the tail of the surprisal distribution
should still be well-approximated by a GPD. \citet{lookout2} provides
some theoretical justification for this approach in the context of
kernel density estimation.

Here we extend that idea to allow \(F\) to be any distribution,
including conditional distributions, that admits at least a finite
second moment. Specifically, we fit a GPD to the largest \(\beta\) of
\(\{s_i\}_{i=1}^n\), obtaining estimates of the location, scale and
shape parameters, \((\mu,\sigma,\xi)\). We calculate the probability of
a surprisal at least as large as that of each observation,
\(p_i = \beta\left[1-P(s_i \mid \hat\mu, \hat\sigma, \hat\xi)\right]\),
where \(P\) is the distribution function of the GPD. Observations not
contained in the largest \(\beta\) of \(\{s_i\}_{i=1}^n\) are assigned
\(p_i = \beta\). To justify this approach, we seek conditions on the
surprisal distribution of \(S = -\log f(\bm{Y})\), where
\(\bm{Y}\sim F\), and \(f\) is the corresponding generalized density
function, similar to the Fisher-Tippett-Gnedenko (FTG) theorem
\citep{coles2001introduction}.

We consider three possible conditions on the surprisal distribution of
\(S = -\log f(\bm{Y})\), describing increasingly heavy tails:

\begin{enumerate}
\def\labelenumi{\arabic{enumi}.}
\tightlist
\item
  where \(S\) is sub-Gaussian (e.g., when \(f\) is bounded above and
  below);
\item
  where \(S\) is sub-exponential (e.g., when \(f\) is Gaussian); and
\item
  where \(S\) has a polynomial moment (e.g., when \(f\) is Student-t).
\end{enumerate}

We state these as three alternative assumptions.

\begin{definition}[Sub-Gaussian]\protect\hypertarget{def-bounded}{}\label{def-bounded}

The random variable \(S =-\log f(\bm{Y})\in\mathbb{R}\), and satisfies,
for all \(\lambda\in\mathbb{R}\), and some \(\nu>0\), \[
    \E \exp\{\lambda (S-\E[S])\}\le \exp\{\lambda^2 \nu^2/2\}.
\]

\end{definition}

\begin{definition}[Sub-exponential]\protect\hypertarget{def-subbounded}{}\label{def-subbounded}

The random variable \(S\) admits an exponential moment in a neighborhood
of zero; i.e., for parameters \(\nu\) and \(b\), \[
    \E \exp\{\lambda(S-\E[S])\} \leq \exp\{\lambda^2\nu^2/ 2\} \quad\text{ for all } \quad |\lambda|< 1/b.
\]

\end{definition}

\begin{definition}[Polynomial]\protect\hypertarget{def-polybounded}{}\label{def-polybounded}

The random variable \(|S|\) admits a polynomial moments of order
\(\alpha\ge 1\); i.e., \(\E[|S|^p] \le C^p\) for some \(C>0\) such that
\(C^p-1>0\).

\end{definition}

Assumption~\ref{def-bounded} will be satisfied in cases where the
density \(f\) is bounded both above and below over the support of \(f\),
such as with mass functions on a finite set. However, it may not be
satisfied if the log density has no lower bound over the support of
\(f\).

Assumption~\ref{def-subbounded} will be satisfied when \(f\) has full
support on \(\mathbb{R}^d\) and the log density is unbounded below, such
as with a Gaussian distribution.

Assumption~\ref{def-polybounded} will be satisfied when \(f\) has
polynomial tails, such as with a Student-t distribution.

Under each of Assumptions \ref{def-bounded}--\ref{def-polybounded}, we
seek a result on the limiting distribution of the maximum of the
surprisals \(M_n = \max\{s_1,\dots,s_n\}\). This is provided in
Theorem~\ref{thm-three_types}, with the proof given in the Appendix.

\begin{theorem}[]\protect\hypertarget{thm-three_types}{}\label{thm-three_types}

Let \(\bm{y}_1,\dots,\bm{y}_n\) be an iid sequence from \(F\), and let
\(S =-\log f(\bm{Y})\), where \(\bm{Y}\sim F\). Further, let
\(M_n=\max\{s_1,\dots,s_n\}\), where \(s_i=-\log f(y_i)\).

\begin{enumerate}
\def\labelenumi{\arabic{enumi}.}
\tightlist
\item
  If the surprisal satisfies Assumption~\ref{def-bounded}, then \[
    \sup_{s:s>0}\left|\Pr\left\{\big|M_n-\E[S]\big|\ge \sqrt{2\nu^2 s}+\sqrt{2\nu^2\log(2n)}\right\}-e^{-s}\right|=o(1).
  \]
\item
  If the surprisal satisfies Assumption~\ref{def-subbounded}, then \[
    \sup_{s:s>1/b}\left|\Pr\Big\{\big|M_n-\E[S]\big|\ge (2b)s+(2b)\log(2n)\Big\}-e^{-e^{-s}}\right|=o(1).
  \]
\item
  If the surprisal satisfies Assumption~\ref{def-polybounded}, then \[
   \sup_{s:s>C}\left|\Pr\Big\{\big|M_n-\E[S]\big|\ge (Csn^{1/p})\Big\}-e^{-s^{-p}}\right|=o(1).
  \]
\end{enumerate}

\end{theorem}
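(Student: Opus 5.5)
The plan is to reduce all three parts to one template: combine a union bound over the $n$ iid surprisals with the tail inequality implied by the relevant assumption, and then read the resulting tail bound for $M_n$ as producing the stated limiting form. Write $\bar S=\E[S]$, and let $S_i=-\log f(\bm{y}_i)$ be the iid surprisals. By independence,
\[
\Pr\{|M_n-\bar S|\ge t\}\le \Pr\Big\{\max_{i\le n}|S_i-\bar S|\ge t\Big\}=1-\big[1-\Pr(|S-\bar S|\ge t)\big]^n .
\]
Everything therefore comes down to a sharp enough control of $\bar G(t):=\Pr(|S-\bar S|\ge t)$. The standard route is to show $n\,\bar G(u_n(s))\to \tau(s)$ along the normalising sequence $u_n(s)$ given in each part. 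Then $[1-\bar G(u_n(s))]^n\to e^{-\tau(s)}$, and the stated limits follow with $\tau(s)=s$, $e^{-s}$ and $s^{-p}$ respectively. Uniformity in $s$ over the stated ranges will come from monotonicity in $s$ together with a P\'olya-type argument: both sides are monotone in $s$ and the limit is continuous.

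For part 1, Assumption~\ref{def-bounded} and Chernoff give $\bar G(t)\le 2e^{-t^2/(2\nu^2)}$. Taking $t=\sqrt{2\nu^2 s}+\sqrt{2\nu^2\log(2n)}$ and using $(a+b)^2\ge a^2+b^2$ yields $n\bar G(t)\le e^{-s}$. For part 2, the sub-exponential Bernstein bound gives $\bar G(t)\le 2e^{-t/(2b)}$ for $t\ge \nu^2/b$. With $t=2bs+2b\log(2n)$ this yields $n\bar G(t)\le e^{-s}$, and the restriction $s>1/b$ is what keeps us in the linear regime of the Bernstein bound. For part 3, Markov's inequality with Assumption~\ref{def-polybounded} gives $\bar G(t)\le \E|S-\bar S|^p/t^p\lesssim C^p t^{-p}$. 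With $t=Csn^{1/p}$ this gives $n\bar G(t)\lesssim s^{-p}$, matching the Fr\'echet form $e^{-s^{-p}}$. In each case the key algebraic step is the inequality $1-(1-x/n)^n\approx 1-e^{-x}$, and I would make it quantitative via $|(1-x/n)^n-e^{-x}|\le x^2/n$ for $0\le x\le n$.

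The main obstacle is that these tail inequalities only give \emph{upper} bounds on $\bar G$. They show that the probability in each statement is at most roughly $1-e^{-\tau(s)}$ (or of the same order), but not that it converges to the stated limit. Genuine convergence to $e^{-s}$, $e^{-e^{-s}}$ or $e^{-s^{-p}}$ would require a matching lower bound, i.e.\ that $\bar G$ is asymptotically equivalent to the Chernoff, Bernstein or Markov envelope. That amounts to the von Mises or regular-variation conditions of the FTG theorem. My first attempt would be to interpret the $o(1)$ statement as saying that the distance is governed by the envelope, bounding the difference by the envelope term, which vanishes in $n$ for fixed $s$. If that is insufficient, I would add, as an implicit regularity condition, that the envelope is attained asymptotically, e.g.\ $\log\bar G(t)\sim -t^2/(2\nu^2)$ in part 1, and then invoke the classical domain-of-attraction criteria to finish. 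The careful bookkeeping in each case is checking that the normalising constants line up.
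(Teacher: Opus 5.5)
Your route is the same as the paper's: its appendix also combines independence (or a union bound) with a Chernoff, Bernstein or Markov tail bound and the limit $(1-x/n)^n\to e^{-x}$. The obstacle you flag is a genuine gap, and it is exactly where the paper's own argument breaks. For part 1 the paper stops at the one-sided bound of Corollary~\ref{cor-one}. For parts 2 and 3 it passes from bounds of the form $\Pr\{\cdot\}\le 1-(1-x/n)^n$ to two-sided $o(1)$ statements without any lower bound.

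Neither of your fallbacks closes the gap. Reading the $o(1)$ as an envelope bound proves a different, one-sided statement. No added regularity helps either, because the displayed limits fail under the stated assumptions. In part 1, keep the cross term you discarded via $(a+b)^2\ge a^2+b^2$: with $t_n(s)=\sqrt{2\nu^2 s}+\sqrt{2\nu^2\log(2n)}$,
\[
\Pr\bigl\{|M_n-\E[S]|\ge t_n(s)\bigr\}\le 2n\,e^{-t_n(s)^2/(2\nu^2)}=e^{-s-2\sqrt{s\log(2n)}}\longrightarrow 0
\]
for each fixed $s>0$. So at $s=1$ the distance from $e^{-1}$ tends to $e^{-1}$, for every sub-Gaussian surprisal. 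In parts 2 and 3, for each fixed $n$ the probability tends to $0$ as $s\to\infty$, since $M_n$ is a.s.\ finite. Meanwhile $e^{-e^{-s}}$ and $e^{-s^{-p}}$ tend to $1$, so the supremum equals $1$ for every $n$, whatever von Mises or regular-variation condition you add.

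Your bookkeeping also hides a mismatch. $[1-\bar G(u_n(s))]^n$ is the probability of \emph{no} exceedance. So $n\bar G(u_n(s))\to\tau(s)$ makes the exceedance probability tend to $1-e^{-\tau(s)}$, which is the complement of the targets in parts 2 and 3. In part 1, $\tau(s)=s$ is incompatible with your own bound $n\bar G(u_n(s))\le e^{-s}$.

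Constants inside $\tau$ matter too. The `$\lesssim$' in part 3 hides a factor $2^p$ from centring, since in general only $\E|S-\E[S]|^p\le(2C)^p$ is available. Log-equivalence $\log\bar G(t)\sim -t^2/(2\nu^2)$ leaves an $o(\log n)$ error in the exponent at $t=u_n(s)$, which can send $n\bar G(u_n(s))$ to $0$ or $\infty$.

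What this route honestly proves are one-sided bounds, e.g.\ $\limsup_{n}\Pr\{|M_n-\E[S]|\ge 2bs+2b\log(2n)\}\le 1-e^{-e^{-s}}$. A genuine extreme-value limit needs a domain-of-attraction hypothesis such as $n\Pr\{S>a_n+b_n s\}\to\tau(s)$. There $a_n,b_n$ are fixed by the actual tail of $S$, not by $\nu$, $b$ or $C$; Gaussian-type surprisal tails, for instance, give a Gumbel rather than a reversed-Weibull limit. The exceedance probability must then be compared with $1-e^{-\tau(s)}$.
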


Theorem~\ref{thm-three_types} shows that the limiting distribution of
\(M_n\) falls within the class of extreme value distributions, for each
of the three assumptions. If the surprisal has Gaussian like tails, then
the maximum surprisal is a reversed Weibull; if it only has an
exponential moment, then the maximum surprisal will be Gumbel; if the
maximum surprisal only has a polynomial moment, then it is Fréchet. As
each of these distributions is a member of the generalized extreme value
family, then the upper tail of the distribution can be approximated by a
Generalized Pareto Distribution \citep[Theorem
4.1]{coles2001introduction}.

In particular, even if the assumed distribution \(F\) is not correct,
provided the tail of the surprisal distribution satisfies one of the
three assumptions, we can still fit a GPD to the largest surprisal
values, and use this to estimate the tail probabilities.

Theorem~\ref{thm-three_types} also helps clarify the consequences of
mis-specifying the distribution used to compute surprisal values. If the
data are modeled as light-tailed (i.e., sub-Gaussian) but in fact have a
heavy tail (sub-exponential or polynomial), then the maximal surprisal
follows a Fréchet distribution. The converse, however, does not hold: if
one assumes a heavy-tailed model (e.g., sub-exponential or polynomial)
when the data are actually light-tailed (e.g., sub-Gaussian), the
resulting distribution of the maximal surprisal remains consistent with
the lighter tail (i.e., it is sub-Gaussian). As the tail of the
surprisal distribution becomes heavier, convergence slows, and more data
is required to reliably detect surprisals. In effect, underestimating
tail heaviness incurs a substantial penalty --- manifested as slow
convergence of the GDP and, consequently, inaccurate anomaly detection
--- whereas overestimating tail heaviness costs relatively little. This
suggests that, in practice, it is safer to err on the side of assuming
heavier tails.

\section{Applicability for conditional
distributions}\label{sec-conditional}

The theoretical results in Lemma~\ref{lem-confident} and
Theorem~\ref{thm-three_types} assume that the surprisal is based on iid
data \(\bm{Y}\) generated from some unknown distribution. In this way,
these results are agnostic as to the dimension of \(\bm{Y}\); the
resulting surprisal \(S_i=-\log f(\bm{Y})\) is always univariate, and
remains iid since any function of \(\bm{Y}\) is also iid.

However, there are many modeling situations, such as regression
settings, where we are not necessarily interested in measuring the
surprisal that results from the joint distribution of the data, but
along a fixed vector of conditioning variables (e.g., age, or gender).
In such cases, we can think of our observed data being the variables
\(\bm{Y}=(Z,\bm{X})\) where \(Z\) is some outcome of interest and
\(\bm{X}\) are variables that are used to predict or explain \(Z\). In
such cases, we can directly apply the surprisal approach with a
conditional model of the form \(f_Z(z\mid \bm{X}=\bm{x})\), and the
surprisal \(S=-\log f_Z(Z\mid \bm{X}=\bm{x})\) can be thought of as
being computed ``conditional on a given \(\bm{X}=\bm{x}\)''.

For a fixed vector of the conditioning variables, it is not hard to see
why our theoretical results will remain valid: by fixing
\(\bm{X}=\bm{x}\), since we observe iid \(\bm{Y}=(Z_i,\bm{X}_i)\), the
distribution of \(Z\mid \bm{X}=\bm{x}\) is also iid conditional on
\(\bm{X}=\bm{x}\), and the resulting theory applies. That is, for
example, if we wish to calculate surprisals at a fixed value of the
covariates \(\bm{x}\), then Lemma~\ref{lem-confident} and
Theorem~\ref{thm-three_types} can be applied conditional on this fixed
set of covariates.

However, it is important to realize that if we wish to calculate
surprisals across both the outcomes and covariates (i.e., when allowing
both \(Z\) and \(\bm{X}\) to vary), then what is a surprisal may change
from instance to instance. To see why, consider the simple example where
we wish to detect surprisals using a simple Gaussian regression model
for \(Z\mid X\), so that \[
S_i=-\log f_Z(Z_i\mid X_i)\propto (Z_i-\beta_0-\beta_1 X_i)^2.
\] Now, \(S_i\) can be large for two distinct and interrelated reasons:
either because \(Z\) itself is large, relative to the observed values of
\(\beta_0+\beta_1 X\), or because \(X\) is atypical so that
\(Z-\beta_0-\beta_1 X\) is large.

That is, in the general case, where we are searching for surprisals over
the conditional model without \(\bm{X}\) being fixed, the behavior of
\(S_i\) is determined by that of the joint distribution: \[
\Pr(S_i>s) = \int \mathbbm{1}(-\log f_Z(z \mid \bm{x}) > s) f_Z(z \mid \bm{x}) f_X(\bm{x})\,dz\,d\bm{x}.
\] This clarifies that extreme behavior in \(S_i\) can result from
extreme values of \(Z\), \(\bm{X}\), or simply values of \(Z\) and
\(\bm{X}\) where \(f_Z(Z\mid \bm{X})\) is very small. Critically, the
latter can occur even when \(Z\) and \(\bm{X}\) take on values that are
typical for their marginal distribution, but which are not well-captured
by the assumed conditional model.

\section{Experiments}\label{sec-experiments}

\subsection{Standard univariate normal
values}\label{standard-univariate-normal-values}

Although we need to assume some density function \(f\) when computing
surprisals, we want the resulting tail probabilities to be robust to a
mis-specified distribution. Suppose we generate 1000 observations from a
\(N(0,1)\) distribution, but use a \(t(4)\) distribution to compute the
surprisal values. The empirical tail probabilities, and those obtained
using the GPD approach, should still give reasonable estimates of the
tail probabilities, despite the incorrect distribution being used to
compute the surprisals.

To study the effect of the assumed distribution, we compute the
surprisal values using both the correct Normal density and an incorrect
\(t(4)\) density. Then, for each set of surprisal values, we estimate
the surprisal tail probabilities in three ways: (a) using the same
distribution used to compute the surprisal values; (b) using empirical
tail probabilities; and (c) using a GPD fitted to the largest 100
surprisal values.

\begin{figure}

\centering{

\pandocbounded{\includegraphics[keepaspectratio]{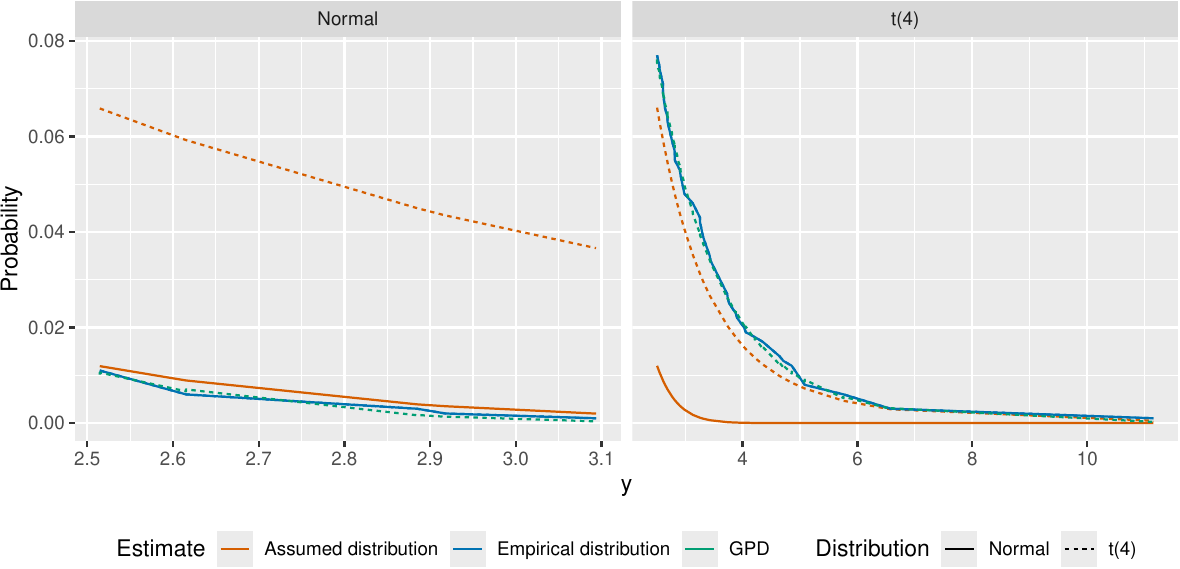}}

}

\caption{\label{fig-expt1}Left: Observations are computed from a
\(N(0,1)\) distribution. Right: Observations are computed from a
\(t(4)\) distribution. Distribution indicates which distribution was
used to compute the surprisal values. Estimate shows how the tail
probabilities were computed. This demonstrates that the surprisal
probabilities computed using the empirical distribution or under the GPD
are still relatively accurate, even when the wrong distribution is used
to compute the surprisal values.}

\end{figure}%

The various estimates are displayed in the left panel of
Figure~\ref{fig-expt1} for observations greater than 2.5, showing how
both empirical tail probabilities and the GPD estimates are still
accurate, regardless of which distribution was used to compute the
surprisals. The solid lines show the results for surprisals computed
using the true \(N(0,1)\) distribution, while the dashed line shows the
results for surprisals computed using the mis-specified \(t(4)\)
distribution. The estimates computed directly from \(t(4)\) (``Assumed
distribution'') are very inaccurate, but all other estimates lie close
to the truth.

In summary, even when we don't know the distribution of the data, we can
use some incorrect assumed distribution to compute the surprisals, and
then apply either a GPD or the empirical distribution to obtain good
estimates of the surprisal probabilities.

The converse situation is shown in the right panel of
Figure~\ref{fig-expt1}, where the data are generated from a \(t(4)\)
distribution, and the surprisal values are computed using both the
correct \(t(4)\) density and an incorrect \(N(0,1)\) density. Again, the
empirical tail probabilities, and the GPD estimates, are relatively
accurate, even when the wrong distribution is used to compute the
surprisal values.

\subsection{Bivariate gamma values}\label{bivariate-gamma-values}

The quality of the approximation used depends on the amount of data
available, so in this next experiment we compare the tail probabilities
of the different approaches across sample sizes. Let
\(\bm{Y} = (Y_{i,1},Y_{i,2})'\) where \(Y_{i,j}\) are independent
Gamma(2,2) random variables. We generate \(n\) observations from this
bivariate distribution, and explore the quality of the probability
estimates as \(n\) increases from \(100\) to \(10,000\). The surprisal
values are computed using the correct distribution, as well as the
incorrect bivariate normal distribution with mean \((1,1)'\) and
covariance matrix \(0.5\bm{I}\). For each set of surprisal values, we
estimate the tail probabilities in two ways: (a) using empirical tail
probabilities; and (b) using a GPD fitted to the largest 10\% of
surprisal values. We repeat this experiment 1000 times for each sample
size. The anomaly rate shown is a loess smooth of the proportion of
anomalies detected when \(\alpha=0.01\) for different sample sizes,
along with 95\% confidence intervals. The false anomaly rate under the
true distribution is, of course, 0.01.

\begin{figure}

\centering{

\pandocbounded{\includegraphics[keepaspectratio]{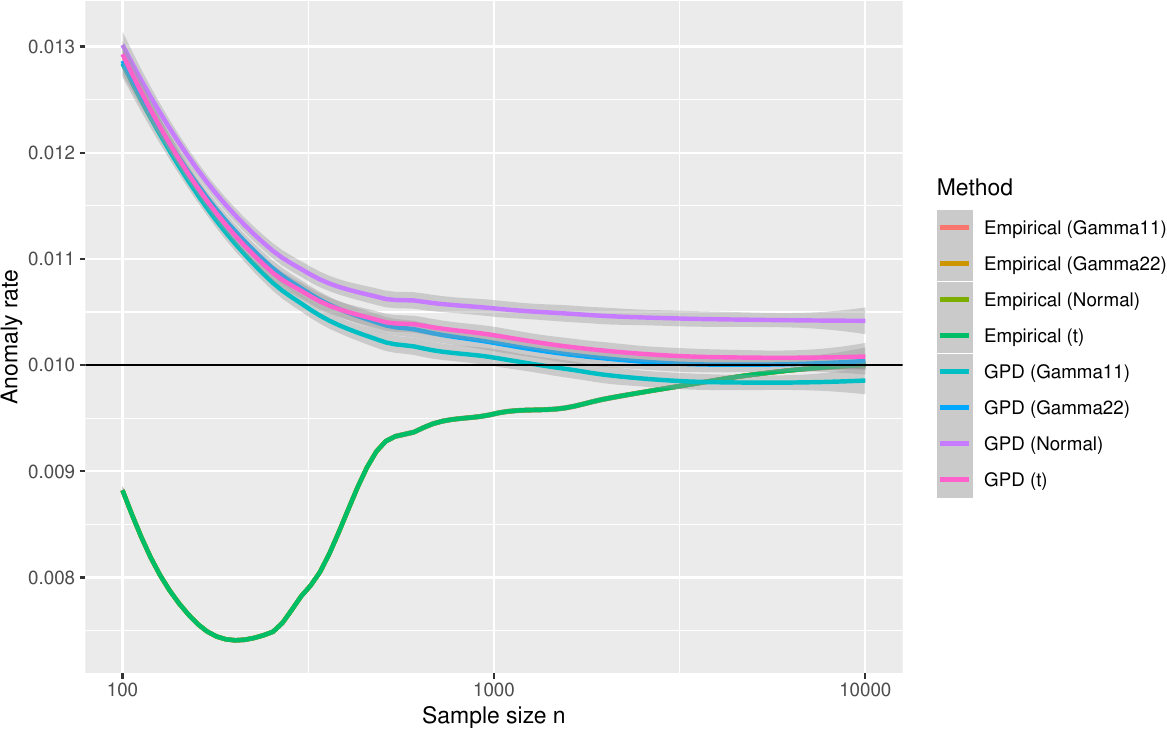}}

}

\caption{\label{fig-expt2}Estimated false anomaly rate under different
approximations when \(\alpha = 0.01\).}

\end{figure}%

Note that the empirical tail probabilities are identical under both the
correct Gamma and incorrect Normal distributions, because the ranking of
the surprisals in the tail are unchanged. Clearly, all methods have some
small-sample bias, with the GPD converging more quickly than the
empirical approximation. Among the GPD fits, using a Student-t reference
distribution performs nearly as well as using the true model, whereas
using a Normal reference distribution is noticeably worse even at large
sample sizes. The reason is tail mismatch: the Normal distribution is
lighter-tailed than the true Gamma distribution, while the Student-t is
heavier-tailed. This is consistent with our previous comments, that it
is safer to err on the side of assuming heavier tails, as
underestimating tail heaviness leads to slow convergence, while
overestimating tail heaviness incurs relatively little cost.

\section{Applications}\label{sec-applications}

\subsection{French mortality rates}\label{french-mortality-rates}

French mortality rates, disaggregated by age (single years) and sex, are
available from the \citet{HMD}. We will consider male and female data
from 1816 to 1999 over ages 0 to 85, giving 172 separate time series,
each of length 184.

Figure~\ref{fig-fr_mortality_plot} shows the data for both sexes and all
ages. The mortality rates have improved over time for all ages,
especially since 1950. Infant mortality (age 0) is much higher than the
rest of childhood, with mortality rates at a minimum at about age 10 in
all years. The highest mortality rates are for the oldest age groups.
The effect of the two wars are particularly evident in the male
mortality rates.

\begin{figure}

\centering{

\pandocbounded{\includegraphics[keepaspectratio]{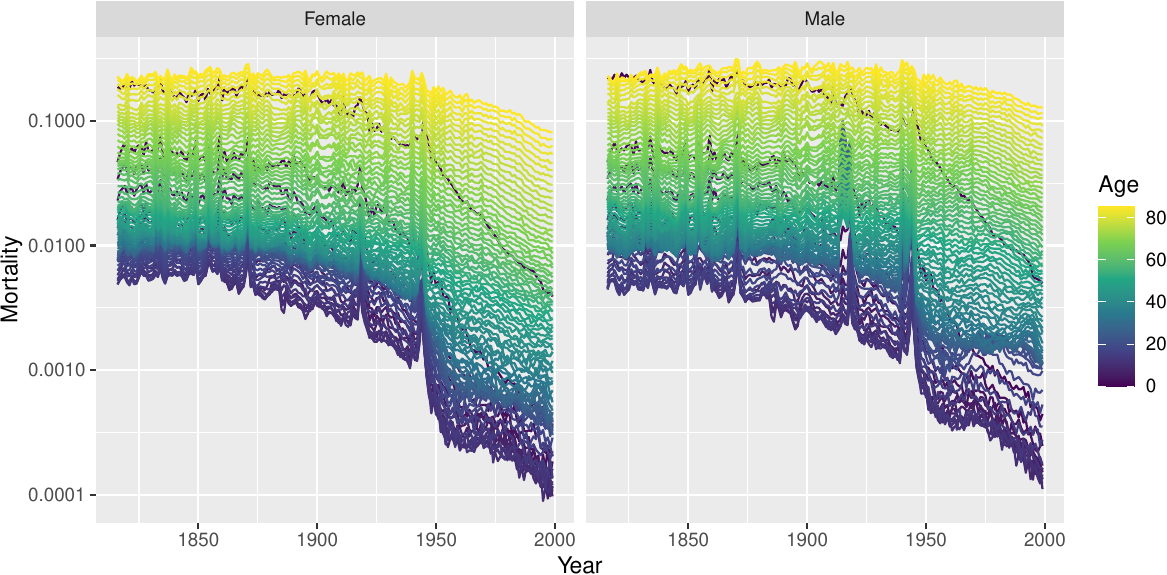}}

}

\caption{\label{fig-fr_mortality_plot}French mortality rates by sex and
age from 1816 to 1999. We use a log scale because the rates are vastly
different for different age groups.}

\end{figure}%

\begin{figure}

\centering{

\pandocbounded{\includegraphics[keepaspectratio]{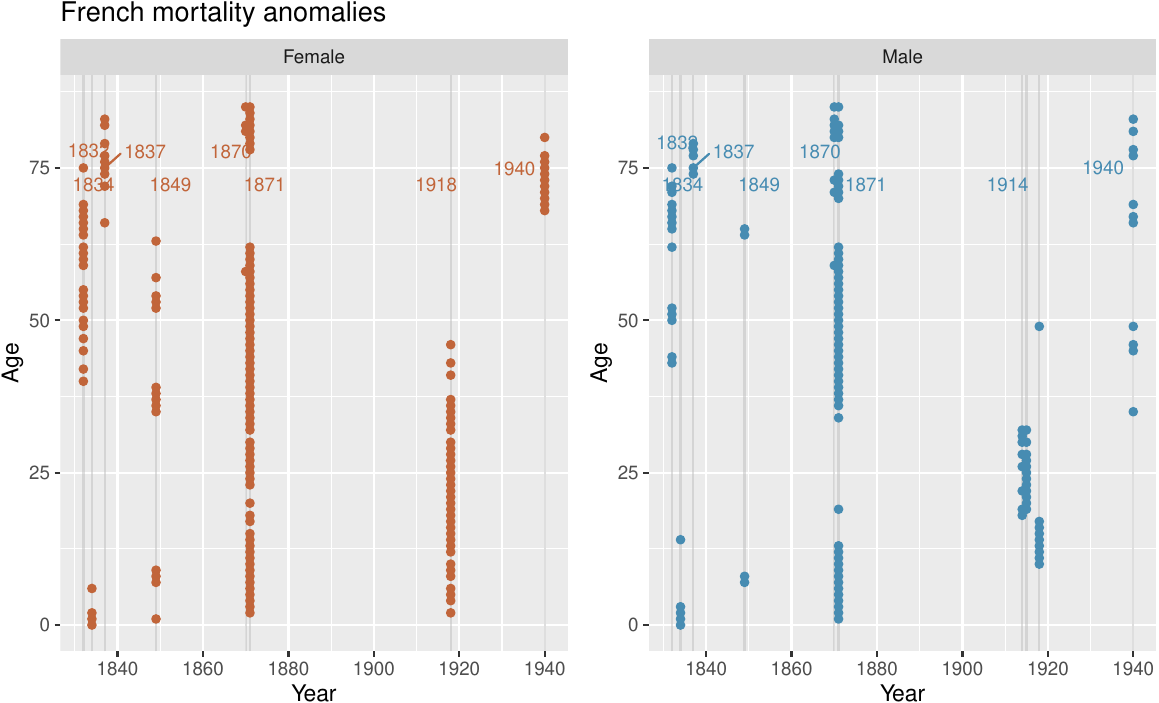}}

}

\caption{\label{fig-hampel_fr}Anomalies identified in the French
mortality data by year and age. Wars and epidemics in French history are
revealed.}

\end{figure}%

We compute the surprisals using the conditional distribution
(\ref{eq-hampel}), and then compute the surprisal probabilities under a
GPD with \(\alpha = 0.01\). With so many observations, there is a high
likelihood of many false positives, so we only retain anomalies when at
least three age groups are anomalous in the same year, and for the same
sex. This results in 289 anomalies out of 31648 observations.
Figure~\ref{fig-hampel_fr} shows these anomalies, which coincide with
various wars and epidemics that have occurred over French history:

\begin{itemize}
\tightlist
\item
  1832, 1849: Cholera outbreaks
\item
  1870: Franco--Prussian war
\item
  1871: Repression of the `Commune de Paris'
\item
  1914-1918: World War I
\item
  1918: Spanish flu outbreak
\item
  1940: World War II
\end{itemize}

In addition, there appears to have been unusually high infant mortality
in 1834, and unusually high elderly mortality in 1837. The causes of
these is unknown.

\subsection{Test cricket not outs}\label{test-cricket-not-outs}

Career batting data for all men and women to have played test cricket
between 1834 and 6 October 2025 were obtained from \citet{Rweird}. We
will investigate if there are some batters who have an unusually high
proportion of innings where they have not been dismissed, known as a
``not out''. The data set contains results from 97649 innings, of which
12695 were not outs. So the overall proportion of not outs is
\(12695 / 97649 = 0.130\).

Figure~\ref{fig-notouts} shows the proportion of not outs for each
batter as a function of the number of innings they played. The unusual
patterns on the left of each plot are due to the discrete nature of the
data --- both the number of not outs and the number of innings must be
integers. There is some overplotting that occurs due to batters having
the same numbers of not-outs and innings, which results in the higher
color density of the corresponding plotted points. Batters who have
played only a few innings tend to have a higher proportion of not outs
on average, and a higher variance, than those who have played many
innings.

\begin{figure}

\centering{

\pandocbounded{\includegraphics[keepaspectratio]{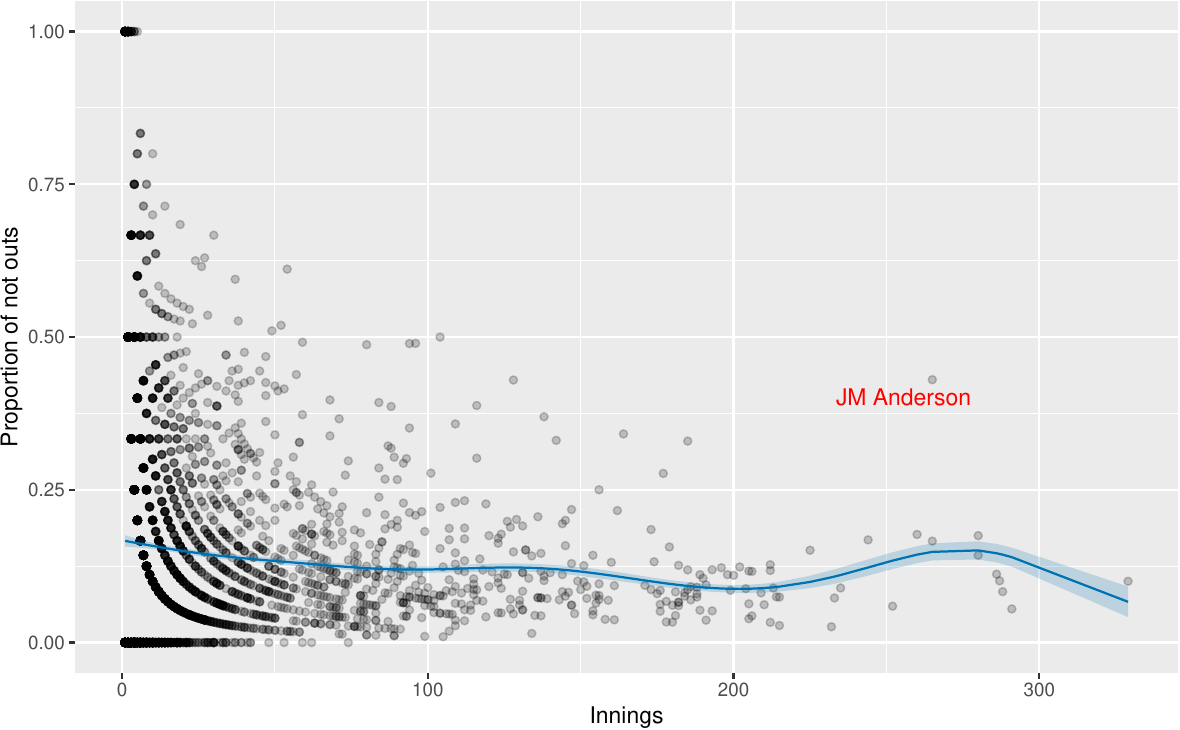}}

}

\caption{\label{fig-notouts}Proportion of not outs for each batter as a
function of the number of innings they played. The blue line and
associated 95\% confidence interval shows the probability of a batter
not being dismissed as a function of the number of innings they have
played.}

\end{figure}%

This suggests that we can construct a generalized additive model
\citep{Wood2017gam} for the number of not outs for each batter as a
function of the number of innings they played. It is natural to use a
Binomial distribution with a logit link function: \[
  \text{NotOuts} \mid \text{Innings} \sim \text{Binomial}(n=\text{Innings},~ p),
\] where \(p\) denotes the probability of a batter not being dismissed
in an innings, \[
  \log(p / (1- p)) = g(\text{Innings}),
\] and \(g()\) is a smooth function to be estimated nonparametrically.
The resulting estimate, obtained from the mgcv package with penalized
regression splines \citep{mgcv}, is shown in blue in
Figure~\ref{fig-notouts}.

Now we can use the fitted model to compute the surprisals from the
Binomial distribution, and find the most anomalous batters. While the
Binomial distribution is probably a good first approximation, it does
assume that the probability of a batter being dismissed does not change
over their career, which is an unlikely assumption as it does not allow
for the development of skill, the value of experience, or the effect of
aging. So we will use the GPD approximation to compute the surprisal
probabilities.

The most anomalous batters are all ``tail-enders'' (i.e., not skilled
batters) who played for a long time (so they have many innings). Because
they batted last, or nearly last, they are more likely to be not out at
the end of the team's innings.

The largest surprisal is for English batter Jimmy Anderson, who has had
114 not outs in 265 innings, which is much higher than the expected
number of not outs of \(265 \times 0.148 = 39.3\). This anomaly is also
seen in Figure~\ref{fig-notouts}, as being somewhat unusual for that
part of the data. Although Jimmy Anderson was not a great batter, he was
good at defense, and was able to bat for a long time without being
dismissed, leaving the other batter time to score runs.

We have identified an anomaly that is not anomalous in the proportion of
not-outs, or in the number of innings, and the difference between the
actual proportion and the predicted proportion is not anomalous either
compared to some of the other values. However, because we have used a
statistical model, we have been able to account for the particular
features of this data set, such as the discrete nature of the data, and
the changing variance, to identify an observation that is anomalous in
the context of the model.

\section{Discussion}\label{sec-discussion}

The surprisal-based framework provides a flexible and principled way to
quantify how unusual an observation is under any specified
distributional model. A key feature of our approach is that it remains
effective even when the assumed distribution is misspecified. This
stands in contrast to many traditional anomaly detection methods, where
model misspecification in the tails can lead to severe distortions in
anomaly scores. By converting the multivariate (or otherwise complex)
modeling problem into the univariate problem of estimating the tail of
the surprisal distribution, we simplify the problem to a domain where
empirical and extreme‑value techniques can be easily applied.

Our theoretical results make clear what kind of misspecification can be
tolerated. Anomalies are defined relative to regions of low density
under an assumed distribution \(F\). For the empirical tail-probability
estimator to be reliable, the assumed model must correctly identify the
\emph{locations} of these low-density regions, even if it does not
accurately capture the \emph{shape} of the tail.
Assumption~\ref{def-empirical_pis} formalizes this requirement: the true
and fitted surprisals need only agree up to a strictly increasing
transformation on the extreme tail. When this condition holds,
Lemma~\ref{lem-confident} guarantees that the empirical surprisal
probabilities form a valid uniform confidence band for the true tail
probabilities on the upper tail. Remarkably, this does not require \(F\)
to have the correct tail behavior or even the correct distributional
form, only the correct tail \emph{ordering}. When the condition fails,
such as when the assumed model imposes symmetry where the data are
skewed, or unimodality where the data are multimodal, the empirical
estimator no longer provides uniform guarantees. In practice, such gross
mis-specifications should be identifiable through exploratory data
analysis.

The GPD approach offers a complementary route to estimating surprisal
tail probabilities. Our extreme-value results show that under broad
conditions (sub‑Gaussian, sub‑exponential, or polynomial surprisal
tails), the maximum surprisal converges to a distribution in the
generalized extreme‑value family. These conditions encompass almost all
common statistical models. Thus, the GPD approximation can be expected
to provide robust tail estimates even when the assumed distribution is
only loosely connected to the true one. That is, provided the chosen
threshold is far enough into the tail that the tail orderings from the
assumed distribution and the true distribution are the same beyond that
point, the GPD approach should yield accurate tail probabilities. The
situation is slightly different from that for the empirical approach, as
the GPD is motivated by the asymptotic behavior of the surprisals, which
means that if \(n\) is large enough for this approximation to be
accurate, then the tail ordering should also be correct above the
threshold.

Threshold selection for the GPD remains an open methodological question;
established techniques from the extreme‑value literature (such as
stability plots or nearest‑order‑statistic heuristics) can be adapted to
surprisal data, but a systematic study in this context is a topic for
future work.

In practice, the choice between the empirical and GPD approaches depends
on the sample size, \(n\), and the desired level of significance,
\(\alpha\). The empirical estimator requires no tuning or model
estimation, and provides finite-sample uniform error control, but it can
be noisy when \(n\alpha\) is small. The GPD estimator is most useful
when \(n\alpha\) is small, as it can reduce variance and allow
extrapolation beyond the observed maximum, but it requires choosing a
threshold and fitting a parametric model to the tail, which introduces
additional complexity and potential for mis-specification. When the
surprisal distribution exhibits discreteness, spikes, or other
non-smooth features that one may not wish to smooth away with a
parametric tail model, the empirical approach may be preferable. In many
cases, it may be advisable to use both approaches in tandem as a
robustness check. While both approaches can be expected to perform well
under moderate model misspecification, neither approach is guaranteed if
the misspecification is severe enough that the tail orderings of the
true and fitted distributions differ substantially.

The surprisal framework also brings interpretability and scalability
benefits. Surprisals provide a single, model-based measure of
unlikeliness that applies equally to discrete, continuous, and mixed
distributions. This avoids the difficulties inherent in distance-based
methods, especially in high-dimensional or highly skewed settings.
Computationally, the method is efficient: surprisal computation is
linear in the sample size, empirical tail estimation requires only
sorting, and GPD fitting is fast even for very large datasets.

In summary, the surprisal framework yields a robust, interpretable, and
theoretically grounded approach to anomaly detection. By separating the
specification of a working model from the estimation of tail
probabilities, and by leveraging both empirical and extreme‑value
methods, we achieve a flexible system that performs well even under
substantial model misspecification. The applications demonstrate how
these ideas can be used to detect meaningful anomalies in complex
real-world data, while the theoretical results clarify when and why the
approach succeeds.

Our methods are implemented in the \texttt{weird} R package
\citep{Rweird}, which provides a user-friendly interface for computing
surprisals and estimating tail probabilities using both empirical and
GPD approaches.

\subsection*{Software and
reproducibility}\label{software-and-reproducibility}
\addcontentsline{toc}{subsection}{Software and reproducibility}

All results presented here can be reproduced using the data and code
available at \url{https://github.com/robjhyndman/Surprisal_Theory}. The
analysis was conducted using R version 4.5.2 \citep{R}, with the
following R packages: weird \citep{Rweird}, tsibble \citep{Rtsibble},
targets \citep{targets, targetspaper}, ggplot2
\citep{ggplot2, ggplotbook}, and other tidyverse \citep{tidyverse}
packages.

\subsection*{Funding}\label{funding}
\addcontentsline{toc}{subsection}{Funding}

Both authors are funded by the Australian Research Council through
Discovery Project DP250100702. Rob Hyndman is a member of the Australian
Research Council Industrial Transformation Training Centre in
Optimisation Technologies, Integrated Methodologies, and Applications
(OPTIMA), Project ID IC200100009.

\subsection*{References}\label{references}
\addcontentsline{toc}{subsection}{References}

\renewcommand{\bibsection}{}
\bibliography{refs.bib}

@book{coles2001introduction,
  address   = {London UK},
  author    = {Coles, Stuart},
  publisher = {Springer},
  series    = {Springer Series in Statistics},
  title     = {An introduction to statistical modeling of extreme values},
  volume    = {208},
  year      = {2001}
}

@book{Cover2006,
  author    = {Cover, Thomas M and Thomas, Joy A},
  edition   = {2nd},
  location  = {Hoboken, NJ},
  publisher = {Wiley},
  title     = {Elements of information theory},
  year      = {2006}
}

@article{Davies1993,
  author       = {Davies, Laurie and Gather, Ursula},
  date         = {1993-09-01},
  doi          = {10.1080/01621459.1993.10476339},
  issue        = {423},
  journaltitle = {Journal of the American Statistical Association},
  pages        = {782--792},
  title        = {The identification of multiple outliers},
  volume       = {88}
}

@manual{ggplot2,
  author  = {Hadley Wickham and Winston Chang and Lionel Henry and Thomas Lin Pedersen and Kohske Takahashi and Claus Wilke and Kara Woo and Hiroaki Yutani and Dewey Dunnington and Teun van den Brand and {Posit, PBC}},
  title   = {{ggplot2}: Create Elegant Data Visualisations Using the Grammar of Graphics},
  url     = {https://cran.r-project.org/package=ggplot2},
  version = {v4.0.2},
  year    = {2026}
}

@book{ggplotbook,
  address   = {New York, USA},
  author    = {Hadley Wickham},
  isbn      = {978-3-319-24277-4},
  publisher = {Springer-Verlag},
  title     = {{ggplot2}:  Elegant Graphics for Data Analysis},
  url       = {https://ggplot2-book.org/},
  year      = {2016}
}

@article{Good1952-zw,
  author  = {Good, I J},
  issue   = {1},
  journal = {Journal of the Royal Statistical Society. Series B,
             Statistical methodology},
  pages   = {107--114},
  title   = {Rational decisions},
  volume  = {14},
  year    = {1952}
}

@article{Grubbs1950,
  author    = {Grubbs, F E},
  journal   = {Annals of Mathematical Statistics},
  month     = mar,
  number    = 1,
  pages     = {27--58},
  publisher = {Institute of Mathematical Statistics},
  title     = {Sample criteria for testing outlying observations},
  url       = {https://projecteuclid.org/euclid.aoms/1177729885},
  volume    = 21,
  year      = 1950
}

@article{hdoutliers,
  author  = {Wilkinson, Leland},
  doi     = {10.1109/TVCG.2017.2744685},
  journal = {IEEE Transactions on Visualization and Computer Graphics},
  number  = {1},
  pages   = {256-266},
  title   = {Visualizing Big Data Outliers Through Distributed Aggregation},
  volume  = {24},
  year    = {2018}
}

@article{HDR96,
  author  = {Rob J Hyndman},
  journal = {The American Statistician},
  number  = {2},
  pages   = {120--126},
  title   = {Computing and graphing highest density regions},
  volume  = {50},
  year    = {1996}
}

@misc{HMD,
  author = {{Human Mortality Database}},
  note   = {Accessed on 6 October 2025},
  title  = {{University of California, Berkeley (USA); Max Planck Institute for Demographic Research (Germany)}},
  url    = {www.mortality.org},
  year   = {2025}
}

@article{lookout,
  author  = {Sevvandi Kandanaarachchi and Rob J Hyndman},
  doi     = {10.1080/10618600.2021.2000425},
  issue   = 2,
  journal = {J Computational \& Graphical Statistics},
  pages   = {586-599},
  title   = {Leave-one-out kernel density estimates for outlier detection},
  volume  = 31,
  year    = 2022
}

@unpublished{lookout2,
  author      = {Rob J Hyndman and Sevvandi Kandanaarachchi and Katharine Turner},
  title       = {When lookout sees crackle: Anomaly detection via kernel density estimation},
  year        = 2025,
  url = {https://robjhyndman.com/publications/lookout2.html}
}

@manual{mgcv,
  author  = {Wood, S N},
  title   = {Mixed GAM Computation Vehicle with Automatic Smoothness Estimation},
  url     = {https://cran.r-project.org/package=mgcv},
  version = {v1.9-4},
  year    = 2025
}

@manual{R,
  address      = {Vienna, Austria},
  author       = {{R Core Team}},
  organization = {R Foundation for Statistical Computing},
  title        = {R: A Language and Environment for Statistical Computing},
  url          = {https://www.R-project.org/},
  year         = {2025}
}

@manual{Rtsibble,
  author  = {Earo Wang and Di Cook and Rob J Hyndman and Mitchell O'Hara-Wild and Tyler Smith and Wil Davis},
  title   = {{tsibble}: {Tidy Temporal Data Frames and Tools}},
  url     = {https://tsibble.tidyverts.org},
  version = {1.2.0},
  year    = {2026}
}

@manual{Rweird,
  author  = {Rob J Hyndman},
  title   = {{weird}: {Functions and Data Sets for "That's Weird: Anomaly Detection
             Using R" by Rob J Hyndman}},
  url     = {https://pkg.robjhyndman.com/weird/},
  version = {2.0.0},
  year    = {2026}
}

@book{Stone2022,
  address   = {Sheffield, England},
  author    = {Stone, James V},
  publisher = {Sebtel Press},
  title     = {Information theory: a tutorial introduction},
  year      = {2022}
}

@manual{targets,
  author  = {William Michael Landau},
  title   = {{targets}: Dynamic Function-Oriented `Make'-Like Declarative Pipelines},
  url     = {https://cran.r-project.org/package=targets},
  version = {v1.12.0},
  year    = {2026}
}

@article{targetspaper,
  author  = {William Michael Landau},
  journal = {Journal of Open Source Software},
  number  = {57},
  pages   = {2959},
  title   = {The targets {R} package: a dynamic Make-like function-oriented pipeline toolkit for reproducibility and high-performance computing},
  url     = {https://doi.org/10.21105/joss.02959},
  volume  = {6},
  year    = {2021}
}

@article{tidyverse,
  author  = {Hadley Wickham and Mara Averick and Jennifer Bryan and Winston Chang and Lucy D'Agostino McGowan and Romain François and Garrett Grolemund and Alex Hayes and Lionel Henry and Jim Hester and Max Kuhn and Thomas Lin Pedersen and Evan Miller and Stephan Milton Bache and Kirill Müller and Jeroen Ooms and David Robinson and Dana Paige Seidel and Vitalie Spinu and Kohske Takahashi and Davis Vaughan and Claus Wilke and Kara Woo and Hiroaki Yutani},
  journal = {Journal of Open Source Software},
  number  = {43},
  pages   = {1686},
  title   = {Welcome to the {tidyverse}},
  volume  = {4},
  year    = {2019}
}

@book{Tribus61,
  address   = {New York, USA},
  author    = {Tribus, Myron},
  publisher = {D. Van Nostrand},
  title     = {Thermodynamics and thermostatics: An introduction to energy, information and states of matter, with engineering applications},
  year      = {1961}
}

@book{Wood2017gam,
  author    = {Wood, S N},
  publisher = {CRC press},
  title     = {Generalized additive models: An introduction with {R}},
  year      = {2017}
}

\newpage{}

\appendix

\section{\texorpdfstring{Proof of
Lemma~\ref{lem-confident}}{Proof of Lemma~}}\label{proof-of-lem-confident}

\begin{proof}
Define \(\mathbb{P}_n(s)=n^{-1}\sum_{i=1}^{n}1[S_i\le s]\). For now,
assume that, on the set \(s\ge s_\star\),
\(\widehat{\mathbb{P}}_n(s)={\mathbb{P}}_n(s)\) almost surely. If this
is satisfied, then \begin{flalign}
\operatorname{Pr}\left\{\sup _{s \ge s_\star}\left|\widehat{\mathbb{P}}_n(s)-G(s)\right|>\epsilon\right\}&=\operatorname{Pr}\left\{\sup _{s \ge s_\star}\left|\mathbb{P}_n(s)-G(s)\right|>\epsilon\right\}\nonumber\\&\le \operatorname{Pr}\left\{\sup _{s \in \mathbb{S}}\left|\mathbb{P}_n(s)-G(s)\right|>\epsilon\right\} \nonumber\\&=2 e^{-2 n \epsilon^2},\label{eq:result}
\end{flalign}where the last inequality follows from the
Dvoretzky-Kiefer-Wolfowitz theorem, for every \(\epsilon>0\). Taking
\(\epsilon=\sqrt{\log (2/\alpha)/(2n)}\), we see that \begin{align*}
\Pr\left\{\sup_{s\in\mathbb{S}}|\mathbb{P}_n(s)-G(s)|>\epsilon\right\}&\le 2e^{-2n\epsilon^2}\\&=2e^{-2n \left(\frac{\log(2/\alpha)}{2n}\right)}\\&=2e^{-\log(2/\alpha)}\\&=\alpha.
\end{align*} Hence, with probability at least \(1-\alpha\),
\(\sup_{s \ge s_\star}|\widehat{\mathbb{P}}_n(s)-G(s)|\le\epsilon\).
Fixing an \(s\ge s_\star\) and re-arranging terms yields the stated
inequality \[
\widehat{\mathbb{P}}_n(s)-\epsilon\le G(s)\le \widehat{\mathbb{P}}_n(s)+\epsilon.
\]

For the above argument to hold, it must be that, on the set
\(s\ge s_\star\), \(\widehat{\mathbb{P}}_n(s)={\mathbb{P}}_n(s)\) almost
surely. We now show that this is satisfied if and only if
Assumption~\ref{def-empirical_pis} is satisfied.

\textbf{Case 1: \(\Rightarrow\)}. Take \(s\ge s_\star\). Consider the
random variable \(1[S_i\le s]=1\) and note that, by Assumption
Assumption~\ref{def-empirical_pis}, it must be that
\(1[h(\widehat{S}_i)\le h(s)]=1[S_i\le s]\) almost surely. However,
since \(h(\cdot)\) is strictly increasing, it has an inverse that is
also strictly increasing, so that (almost surely) \[
1[h(\widehat{S}_i)\le h(s)]=1[\widehat{S}_i\le s].
\] Hence, for \(s\ge s_\star\),
\(\widehat{\mathbb{P}}_n(s)={\mathbb{P}}_n(s)\) and \eqref{eq:result} is
satisfied.

\textbf{Case 2: \(\Leftarrow\)}. Now, assume that
\(\widehat{\mathbb{P}}_n(s)={\mathbb{P}}_n(s)\) for all
\(s\ge s_\star\). Then, the sets \[
\mathbb{S}:=\{S_1,\dots,S_n:S_i>s_\star\}\text{ and }\widehat{\mathbb{S}}:=\{\widehat{S}_1,\dots,\widehat{S}_n:\widehat{S}_i>s_\star\}
\]must satisfy \(|\mathbb{S}|=|\widehat{\mathbb{S}}|\), since
\(\widehat{\mathbb{P}}_n(s)={\mathbb{P}}_n(s)\). Thus, the values in
\(\widehat{\mathbb{S}}\) and \(\mathbb{S}\) must be the same up to a
re-ordering of elements. Hence, there exists a bijection such that
\(S_i=h(\widehat{S}_i)\) for all \(s>s_\star\).
\end{proof}

\section{\texorpdfstring{Proof of
Theorem~\ref{thm-three_types}}{Proof of Theorem~}}\label{proof-of-thm-three_types}

We prove each of the three cases in turn.

\subsection{\texorpdfstring{\(S\) has sub-Gaussian
tails}{S has sub-Gaussian tails}}\label{s-has-sub-gaussian-tails}

\begin{lemma}[]\protect\hypertarget{lem-bounds}{}\label{lem-bounds}

Suppose \(\bm{y}_1,\dots,\bm{y}_n\) are iid from a density \(f\), and
let \(M_n = \max\{s_1,\dots,s_n\}\) where \(s_i = -\log f(\bm{y}_i)\).
If \(S =-\log f(\bm{Y})\in\mathbb{R}\), and satisfies, for all
\(\lambda\in\mathbb{R}\), and some \(\nu>0\), \[
\E \exp\{\lambda (S-\E[S])\}\le \exp\{\lambda^2 \nu^2/2\},
\] then, for \(s\in\mathbb{R}_+\), \[
\Pr\{M_n-\E[S]\ge s\}\le 1-\left(1-e^{-\frac{1}{2}s^2/\nu^2}\right)^n.
\]

\end{lemma}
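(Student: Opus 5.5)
The plan is to reduce the statement about the maximum to a tail bound for a single surprisal, and then use independence. Write $\mu = \E[S]$. Because the $\bm{y}_i$ are iid, the surprisals $s_1,\dots,s_n$ are iid copies of $S$. The event $\{M_n - \mu < s\}$ is exactly $\bigcap_{i=1}^n \{s_i - \mu < s\}$. Independence therefore gives
\[
\Pr\{M_n-\mu\ge s\} = 1-\prod_{i=1}^n \Pr\{s_i-\mu< s\} = 1-\bigl(1-\Pr\{S-\mu\ge s\}\bigr)^n .
\]
Since $x\mapsto 1-(1-x)^n$ is nondecreasing on $[0,1]$, it suffices to show the one-sided tail bound $\Pr\{S-\mu\ge s\}\le e^{-s^2/(2\nu^2)}$ for $s\ge 0$.

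That tail bound is the standard Chernoff argument. For any $\lambda>0$, Markov's inequality applied to $e^{\lambda(S-\mu)}$, combined with the sub-Gaussian moment condition, gives
\[
\Pr\{S-\mu\ge s\}\le e^{-\lambda s}\,\E e^{\lambda(S-\mu)}\le \exp\{-\lambda s+\lambda^2\nu^2/2\}.
\]
Minimising the exponent over $\lambda>0$ gives $\lambda = s/\nu^2$, which is admissible for $s>0$. The resulting bound is $e^{-s^2/(2\nu^2)}$. The case $s=0$ is trivial because the right-hand side equals $1$.

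Substituting this bound into the product formula, and using monotonicity of $1-(1-x)^n$, gives the stated inequality
\[
\Pr\{M_n-\mu\ge s\}\le 1-\bigl(1-e^{-s^2/(2\nu^2)}\bigr)^n .
\]

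There is no real obstacle here; the only points needing care are the following.
\begin{itemize}
\item The factorisation of the maximum's distribution requires independence of the $s_i$. This is inherited from the $\bm{y}_i$ because each $s_i$ is a measurable function of $\bm{y}_i$ alone.
\item The monotonicity step requires $e^{-s^2/(2\nu^2)}\le 1$, which holds trivially, so the expression $\bigl(1-e^{-s^2/(2\nu^2)}\bigr)^n$ is well defined and in $[0,1]$.
\item $f$ is a generalized density, so $S$ may take atoms. This is harmless, because the argument uses only the moment generating function and never requires continuity of $G$.
\end{itemize}
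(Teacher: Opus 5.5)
Your proposal is correct and follows the same route as the paper: a Chernoff bound with $\lambda = s/\nu^2$ gives the single-surprisal tail, and independence then gives $\Pr\{M_n-\E[S]\ge s\} = 1-(1-\Pr\{S-\E[S]\ge s\})^n$. Your version is slightly tidier than the paper's on two points it glosses over: you keep the centering at $\mu=\E[S]$ consistent throughout, and you work with the strict events $\{s_i-\mu<s\}$, so the product identity holds exactly even when $S$ has atoms.
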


\begin{proof}
For any \(\lambda>0\), \begin{align}
\Pr\{S\ge s\} =
\Pr\left\{e^{\lambda (S-\E[S])}\ge e^{\lambda (s-\E[S])}\right\}
 & \le e^{-\lambda(s-\mu)}\E e^{\lambda (S-\mu)} \\
 & \le e^{-\lambda(s-\mu)}e^{\lambda^2\nu^2/2}.
\end{align} Now, we can maximize the right-hand side of the above bound
in \(\lambda\), which yields the optimal value
\(\lambda=(s-\mu)/\{M^2\}\). Plugging this in we have the bound \[
\Pr\{S\ge s\}\le e^{-\frac{1}{2}(s-\mu)^2/\nu^2}.
\] The stated result now follows by noting that \begin{align}
\Pr\left(M_n-\E[S]\ge s\right) = 1-\Pr\{M_n\le s\}
 & = 1-\Pr\{S\le s\}^n                       \\
 & = 1-(1-\Pr\{S\ge s\})^n                           \\
 & \le 1-\left(1-e^{-\frac{1}{2}s^2/\nu^2}\right)^n.
\end{align}
\end{proof}

\begin{corollary}[]\protect\hypertarget{cor-one}{}\label{cor-one}

If Assumption~\ref{def-bounded} is satisfied, then, for all \(s>0\), \[
\Pr\Big(|M_n-\E[S]| \ge \sqrt{2\nu^2s} + \sqrt{2\nu^2\log(2n)}\Big) \le e^{-s}.
\]

\end{corollary}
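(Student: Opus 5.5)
The plan is to bound the maximum deviation by a union bound over the $n$ observations, applied to the two-sided sub-Gaussian tail. The one-sided lemma is not enough here: Corollary~\ref{cor-one} concerns $|M_n-\E[S]|$, and Lemma~\ref{lem-bounds} only controls the upper tail.

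\textbf{Step 1: two-sided tail for a single surprisal.} Applying the Chernoff argument from the proof of Lemma~\ref{lem-bounds} to $S-\E[S]$ and to $-(S-\E[S])$ uses Assumption~\ref{def-bounded} for both $\lambda>0$ and $\lambda<0$. This gives
\[
\Pr\{|S-\E[S]|\ge t\}\le 2e^{-t^2/(2\nu^2)}, \qquad t>0.
\]

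\textbf{Step 2: reduce to the individual surprisals.} On the event $|M_n-\E[S]|\ge t$, some index $i$ satisfies $|s_i-\E[S]|\ge t$:
\begin{itemize}
\item if $M_n-\E[S]\ge t$, take $i$ to be the index achieving the maximum;
\item if $M_n-\E[S]\le -t$, every $s_i\le M_n$, so $s_i-\E[S]\le -t$ for all $i$.
\end{itemize}
In both cases $\{|M_n-\E[S]|\ge t\}\subseteq\bigcup_i\{|s_i-\E[S]|\ge t\}$. A union bound then gives
\[
\Pr\{|M_n-\E[S]|\ge t\}\le 2n\,e^{-t^2/(2\nu^2)}.
\]

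\textbf{Step 3: choose $t$.} Set $t=\sqrt{2\nu^2 s}+\sqrt{2\nu^2\log(2n)}$. Since $s>0$ and $\log(2n)>0$, we have $(a+b)^2\ge a^2+b^2$ for the two nonnegative summands, so
\[
\frac{t^2}{2\nu^2}\ge s+\log(2n).
\]
Substituting into Step 2,
\[
2n\,e^{-t^2/(2\nu^2)}\le 2n\cdot e^{-s}/(2n)=e^{-s},
\]
which is the claimed bound.

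\textbf{Main obstacle.} The substantive step is Step 2, which turns the maximum into a union of single-variable events. Once that inclusion is checked for both signs, the rest is the Chernoff bound and the inequality $(a+b)^2\ge a^2+b^2$.
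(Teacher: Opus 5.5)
Your proof is correct and follows the same route as the paper: a two-sided sub-Gaussian Chernoff bound for a single surprisal, then a union bound over the $n$ observations giving $2n\,e^{-t^2/(2\nu^2)}$, then substitution of $t=\sqrt{2\nu^2 s}+\sqrt{2\nu^2\log(2n)}$ using $t^2/(2\nu^2)\ge s+\log(2n)$. Your check of the inclusion $\{|M_n-\E[S]|\ge t\}\subseteq\bigcup_i\{|s_i-\E[S]|\ge t\}$, including the lower-deviation case, fills in a step the paper leaves implicit, and it also avoids the typos in the paper's displayed chain (such as writing $|M_n|$ and $\log(n)$).
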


\begin{proof}
This follows directly from the union bound: \begin{align}
\Pr\Big(|M_n|\ge \sqrt{2\nu^2s} + \sqrt{2\nu^2\log(n)}\Big)
 & \le n \Pr\Big\{|S|\ge \sqrt{2}\nu^2s + \sqrt{2\nu^2\log(n)}\Big\} \\
 & \le 2n e^{-[\sqrt{2\nu^2s}+\sqrt{2\nu^2\log(n)}]^2/(2\nu^2)} \\
 & = 2n e^{-\frac{1}{2\nu^2}(2s\nu^2 )-\frac{2\nu^2\log(2n)}{2\nu^2}-\frac{\sqrt{2\nu^2s}\sqrt{2\nu^2\log(2n)}}{2\nu^2}} \\
 & \le e^{-\sqrt{\log(2n)s}-s} \\
 & \le e^{-s}
\end{align} The first result follows from the iid assumption on \(S\)
and the union bound, while the second from the assumption that \(S\) is
sub-Gaussian. The remaining inequalities follow from reorganizing terms.
\end{proof}

Corollary~\ref{cor-one} is useful as it clarifies that if the surprisal
is sub-Gaussian, then the maximal surprisal will follow a reversed
Weibull distribution, and thus is within the class of extreme value
distributions.

\subsection{\texorpdfstring{\(S\) has an exponential
moment}{S has an exponential moment}}\label{s-has-an-exponential-moment}

\begin{lemma}[]\protect\hypertarget{lem-bounds_sub}{}\label{lem-bounds_sub}

Suppose \(\bm{y}_1,\dots,\bm{y}_n\) are iid from a density \(f\), and
let \(M_n = \max\{s_1,\dots,s_n\}\) where \(s_i = -\log f(\bm{y}_i)\).
If Assumption~\ref{def-subbounded} is satisfied, then, for
\(s\in\mathbb{R}_+\), \[
    \Pr\left(M_n-\E[S]\ge s\right)\le 1-
    \begin{cases}
        (1-e^{-\frac{1}{2}s^2/\nu^2})^n & \text{ if }0\le s\le \nu^2/b \\
        (1-e^{-s/2b})^n                 & \text{ if }s>\nu^2/b
    \end{cases}.
\]

\end{lemma}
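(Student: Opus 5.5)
The plan mirrors the proof of Lemma~\ref{lem-bounds}: first obtain a one-sample tail bound for \(S\) by the Chernoff method, then transfer it to the maximum through the iid product structure. Write \(\mu=\E[S]\) and \(t=s\) for the deviation above the mean. For any \(0<\lambda<1/b\), Markov's inequality applied to \(e^{\lambda(S-\mu)}\), together with Assumption~\ref{def-subbounded}, gives
\[
\Pr\{S-\mu\ge t\}\le \exp\{-\lambda t+\lambda^2\nu^2/2\}.
\]
The only difference from the sub-Gaussian case is that \(\lambda\) is now restricted to \((0,1/b)\). The optimization therefore splits into two regimes.

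\textbf{Step 1 (optimizing \(\lambda\)).} The unconstrained minimizer is \(\lambda^\star=t/\nu^2\).

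If \(0\le t\le \nu^2/b\), then \(\lambda^\star\le 1/b\). Taking \(\lambda\) equal to \(\lambda^\star\), or letting \(\lambda\uparrow 1/b\) at the boundary and using continuity of the bound, yields \(\Pr\{S-\mu\ge t\}\le e^{-t^2/(2\nu^2)}\).

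If \(t>\nu^2/b\), the constraint binds, so let \(\lambda\uparrow 1/b\). The exponent becomes \(-t/b+\nu^2/(2b^2)\). Since \(\nu^2/b<t\), we have \(\nu^2/(2b^2)<t/(2b)\), and so \(\Pr\{S-\mu\ge t\}\le e^{-t/(2b)}\).

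This is the standard Bernstein-type sub-exponential tail, and it is exactly the piecewise function appearing in the statement.

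\textbf{Step 2 (transfer to the maximum).} As in Lemma~\ref{lem-bounds}, independence gives
\[
\Pr\{M_n-\mu\ge t\}=1-\Pr\{S-\mu<t\}^n=1-\bigl(1-\Pr\{S-\mu\ge t\}\bigr)^n.
\]
The map \(x\mapsto 1-(1-x)^n\) is increasing on \([0,1]\), so substituting the piecewise upper bound for \(\Pr\{S-\mu\ge t\}\) from Step~1 gives the claim in each regime. Strictly, the iid step yields \(\Pr\{M_n-\mu<t\}\) rather than \(\le\), but using the complementary event \(\{S-\mu\ge t\}\) avoids any issue with atoms of the generalized density. The bounds are also at most one, so the monotonicity argument is legitimate.

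\textbf{Main obstacle.} There is no serious obstacle; the only point needing care is the boundary regime in Step~1. Because the MGF bound is only assumed for \(|\lambda|<1/b\) (open interval), the choice \(\lambda=1/b\) is not directly allowed. It must be reached as a limit, which works because the right-hand side is continuous in \(\lambda\). The inequality \(\nu^2/(2b^2)\le t/(2b)\) must then be checked to hold precisely when \(t\ge \nu^2/b\), so that the two cases match at the breakpoint \(t=\nu^2/b\).
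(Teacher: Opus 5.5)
Your proposal is correct and follows essentially the same route as the paper: a Chernoff bound with $\lambda$ restricted to $(0,1/b)$, optimized at $\lambda=t/\nu^2$ or at the boundary $\lambda=1/b$, then the iid identity $\Pr\{M_n-\mu\ge t\}=1-(1-\Pr\{S-\mu\ge t\})^n$. You are slightly more careful than the paper: you reach $\lambda=1/b$ as a limit because the assumption only covers the open interval, you explicitly verify $-t/b+\nu^2/(2b^2)\le -t/(2b)$ for $t\ge\nu^2/b$, and you use the strict event $\{S-\mu<t\}$ in the product step.
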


\begin{proof}
The result follows a similar argument as for Lemma~\ref{lem-bounds}, up
until we maximize the tail bound. In particular, for any \(\lambda>0\),
\begin{align*}
  \Pr\left\{ S-\E[S]\ge s\right\} =
  \Pr\left\{e^{\lambda (S-\E[S])}\ge e^{\lambda s}\right\}
    & \le e^{-\lambda s}\E e^{\lambda (S-\E[S])} \\
    & \le e^{-\lambda s}e^{\lambda^2\nu^2/2}.
\end{align*} Now, we must consider a constrained optimization to ensure
the right-hand side exists. In particular, we know from the proof of
Lemma~\ref{lem-bounds} that the unconstrained optimum is
\(\lambda=s/\nu^2\), which corresponds to the constrained optimum on the
set \(0\le t<\nu^2/b\). However, when \(s\ge \nu^2/b\), the maximum is
achieved at \(\lambda=1/b\), and putting this together yields \[
  \Pr\left\{S-\E[S]\ge s\right\}\le
  \begin{cases}
    e^{-\frac{1}{2}s^2/\nu^2} & \text{ if }0\le s\le \nu^2/b; \\
    e^{-s/2b}                 & \text{ if } s>\nu^2/b.
  \end{cases}
\] The stated result again follows by noting that \begin{align*}
  \Pr\left\{M_n-\E[S]\ge s\right\}
    & = 1-\Pr\left\{M_n\le s\right\} \\
    & = 1-\Pr\left\{S\le s\right\}^n \\
    & \le 1-
  \begin{cases}
    (1-e^{-\frac{1}{2}s^2/\nu^2})^n & \text{ if } 0\le s\le \nu^2/b; \\
    (1-e^{-s/2b})^n                 & \text{ if } s>\nu^2/b.
  \end{cases}
\end{align*}
\end{proof}

Results similar to Lemma~\ref{lem-bounds_sub} also hold for
\(\Pr\left\{M_n-\E[S]\le s\right\}\), and for
\(\Pr\left\{|M_n-\E[S]|\ge s\right\}\) with an additional factor of
\(2\), i.e., \[
\Pr\left\{|M_n-\E[S]|\ge s\right\}\le 1-
\begin{cases}
  (1-2e^{-\frac{1}{2}s^2/\nu^2})^n & \text{ if }0\le s\le \nu^2/b \\
  (1-2e^{-s/2b})^n                 & \text{ if }s>\nu^2/b.
\end{cases}
\]

\begin{corollary}[]\protect\hypertarget{cor-two}{}\label{cor-two}

Under Assumption~\ref{def-subbounded}, for \(s>\nu^2/b\), as
\(n\rightarrow\infty\) \[
    \Pr\left(|M_n-\E[S]|\ge (2b)s+(2b)\log(2n)\right)\le 1-e^{-e^{-s}}+o(1/n).
\]

\end{corollary}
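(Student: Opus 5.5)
The plan follows the proof of Corollary~\ref{cor-one}: I will combine the two-sided sub-exponential tail bound from Lemma~\ref{lem-bounds_sub} with a union bound (equivalently, the product form $1-(1-q)^n$), and then turn the result into a Gumbel-type expression via $(1-x/n)^n = e^{-x}+O(1/n)$.

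First, write $\mu=\E[S]$ and set $t_n = 2b\,s + 2b\log(2n)$. Since $s>\nu^2/b$, we have $t_n>\nu^2/b$. So the second branch of the two-sided version of Lemma~\ref{lem-bounds_sub}, stated just before the corollary, applies and gives
\[
\Pr\{|S-\mu|\ge t_n\}\le 2e^{-t_n/(2b)} = 2e^{-s}e^{-\log(2n)} = \frac{e^{-s}}{n}.
\]
This is where the particular normalisation $(2b)s+(2b)\log(2n)$ comes from: the factor $2n$ inside the logarithm cancels the factor $2$ from the two-sided bound and leaves exactly $1/n$.

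Second, I use independence of the $s_i$ to reduce $|M_n-\mu|\ge t_n$ to a statement about the individual surprisals. This is the step needing the most care. For the upper side, $\{M_n-\mu\ge t_n\}$ is exactly $\bigcup_i\{s_i-\mu\ge t_n\}$. For the lower side, $M_n-\mu\le -t_n$ forces every $s_i-\mu\le -t_n$, so that event is contained in $\{|s_1-\mu|\ge t_n\}$. Both sides are therefore covered by the complement of $\bigcap_i\{|s_i-\mu|<t_n\}$, and
\[
\Pr\{|M_n-\mu|\ge t_n\}\le 1-\bigl(1-\Pr\{|S-\mu|\ge t_n\}\bigr)^n \le 1-\Bigl(1-\frac{e^{-s}}{n}\Bigr)^n .
\]

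Third, I expand the last term. Put $x=e^{-s}\in(0,1)$. Then $n\log(1-x/n) = -x - x^2/(2n)+O(n^{-2})$, so $(1-x/n)^n = e^{-x}\bigl(1-x^2/(2n)\bigr)+O(n^{-2})$. This gives a right-hand side of $1-e^{-e^{-s}}+O(1/n)$, with the $O(1/n)$ uniform in $s>\nu^2/b$ because $x\le 1$.

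The main obstacle is the size of the remainder. The claim is $o(1/n)$, but the expansion naturally yields an $O(1/n)$ remainder, namely $x^2e^{-x}/(2n)$. That remainder is positive, since $(1-x/n)^n\le e^{-x}$. Consequently $1-(1-x/n)^n \ge 1-e^{-x}$, and the remainder term works against the desired bound rather than for it. I would first try to absorb it by slightly enlarging $t_n$, or by using the sharper one-sided bound for the lower tail, which is $o(1/n)$. Failing that, I would state the bound with an $O(1/n)$ remainder, which still suffices for the $o(1)$ conclusion in part 2 of Theorem~\ref{thm-three_types}.
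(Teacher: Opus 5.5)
\textbf{There is a gap: as written, your proposal proves a remainder of $O(1/n)$, not the claimed $o(1/n)$.} You diagnosed the obstacle correctly, and the paper's own proof has the same problem.

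Your Steps 1--3 follow the paper's proof essentially line for line. The paper also bounds the probability by $1-(1-e^{-s}/n)^n$ using the two-sided tail bound. It then rewrites this as $1-e^{-e^{-s}}$ minus the bracket $(1-e^{-s}/n)^n-e^{-e^{-s}}$, and declares the bracket ``$o(1)\le 0$''. With $x=e^{-s}$, that bracket is $-x^2e^{-x}/(2n)+O(n^{-2})$. It is of exact order $1/n$ and has the unfavourable sign, so this route gives only $1-e^{-e^{-s}}+O(1/n)$. The paper's argument therefore does not justify $o(1/n)$ either.

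Your Step 2 is more careful than the paper, which simply invokes the two-sided version of Lemma~\ref{lem-bounds_sub} for $|M_n-\E[S]|$. Your remedies, however, do not close the gap: enlarging $t_n$ changes the claim, and the $O(1/n)$ fallback concedes it. A small slip as well: $t_n>\nu^2/b$ does not follow from $s>\nu^2/b$ alone in general (it does when $b\ge 1/2$). It does hold once $2b\log(2n)>\nu^2/b$, which suffices because the claim is asymptotic.

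The missing idea is not to merge the two tails in Step 2; your second suggested remedy is half of it. Write $\mu=\E[S]$ and $t_n=2bs+2b\log(2n)$. Apply the one-sided Chernoff bound from the proof of Lemma~\ref{lem-bounds_sub} to $S$ and to $-S$. This gives $\Pr\{S-\mu\ge t_n\}\le e^{-t_n/(2b)}=x/(2n)$ and $\Pr\{S-\mu\le -t_n\}\le x/(2n)$, without the factor $2$. The event $\{M_n-\mu\le -t_n\}$ requires all $n$ surprisals to lie below $\mu-t_n$, so
\[
\Pr\{|M_n-\mu|\ge t_n\}\le 1-\Bigl(1-\frac{x}{2n}\Bigr)^n+\Bigl(\frac{x}{2n}\Bigr)^n .
\]
Now use $\log(1-u)\ge -u/(1-u)$ with $u=x/(2n)\le 1/2$. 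This gives $n\log(1-x/(2n))\ge -(x/2)/(1-x/(2n))\ge -x$, so $(1-x/(2n))^n\ge e^{-x}$ for every $n\ge 1$. Hence, for $n\ge 2$ and uniformly in $s$,
\[
\Pr\{|M_n-\mu|\ge t_n\}\le 1-e^{-e^{-s}}+\frac{1}{4n^2}.
\]
This is the stated bound with an explicit $o(1/n)$ remainder. The factor $2$ inside $\log(2n)$, spent on the upper tail alone rather than on the two-sided union bound, is exactly what absorbs the positive $x^2e^{-x}/(2n)$ term you found.
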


\begin{proof}
Apply Assumption~\ref{def-subbounded} to see that \begin{align*}
    \Pr\left(|M_n-\E[S]|\ge (2b)s+(2b)\log(2n)\right)
     & \le 1- \left\{1-2e^{-[2bs+2b\log(2n)]/2b}\right\}^n                       \\
     & = 1-\big(1-e^{-s}/n\big)^n                                     \\
     & = 1-e^{-e^{-s}}-\left[\big(1-e^{-s}/n\big)^n-e^{e^{-s}}\right].
\end{align*} The result now follows by noting that, for all \(s>0\), \[
    \left[\big(1-e^{-s}/n\big)^n-e^{-e^{-s}}\right]= o(1)\le 0.
\]
\end{proof}

Corollary~\ref{cor-two} is useful as it shows that if the surprisal has
a sub-exponential tail, then the resulting maximum surprisal tail
probability converges to that of an extreme value distribution. In
particular, we have that Corollary~\ref{cor-two} implies \[
    \sup_{s:s>\nu^2/b}\left|\Pr\Big\{|M_n-\mu|\ge \big[2bs+2b\log(2n)\big]\Big\}-\big(1-e^{e^{-s}}\big)\right| = o(1).
\] To see that this implies that \(M_n-\E[S]\), appropriately
standardized, converges to an extreme value distribution, we first
recall that \[
    \Pr\{X_n>x\}-\Pr\{X>x\}=1-\Pr\{X_n\le x\}-\left[1-\Pr\{X\le x\}\right]=\Pr\{X\le x\}-\Pr\{X_n\le x\}.
\] Hence, \[
    \sup_{x\in\mathbb{X}}|\Pr\{X_n\le x\}-\Pr\{X\le x\}|=\sup_{x\in\mathbb{X}}|\Pr\{X_n> x\}-\Pr\{X> x\}|,
\] so that when \(X_n\) and \(X\) are both continuous we see that
Corollary~\ref{cor-two} implies that the CDF of
\((M_n-\E[S]-2b\log(2n))/2b\) converges to that of the Type-1 extreme
value distribution (i.e., Gumbel distribution); i.e., using
Corollary~\ref{cor-two} and the above we have that \[
    \sup_{s:s>\nu^2/b}\left|\Pr\left(|M_n-\mu|\le [2bs+2b\log(2n)]\right\}-e^{e^{-s}}\right|=o(1).
\] This is one notion of convergence in distribution.

\subsection{\texorpdfstring{\(S\) has a polynomial
moment}{S has a polynomial moment}}\label{s-has-a-polynomial-moment}

\phantomsection\label{lem:bounds_poly}
Suppose \(\bm{y}_1,\dots,\bm{y}_n\) are iid from a density \(f\), and
let \(M_n = \max\{s_1,\dots,s_n\}\) where \(s_i = -\log f(\bm{y}_i)\).
If Assumption~\ref{def-polybounded} is satisfied, then, for \(s>0\) and
such that \(s>C>0\), \[
  \Pr\left(|M_n-\E[S]|\ge s\right) \le 1- (1-C^p/s^p)^n.
\] Hence, \[
    \sup_{s:s>C} \left|\Pr\big(|M_n-\E[S]|\le (s\cdot n^{1/p})\big)-e^{-s^{-p}}\right|=o(1).
\]

\begin{proof}
By Markov's inequality, \[
    \Pr\left\{ |S-\E[S]|\ge s\right\}\le \frac{\E|S|^p}{s^p}\le \frac{C^p}{s^p}.
\] To derive the first stated result, again note that \begin{align*}
\Pr\left\{|M_n-\E[S]|\ge s\right\}
& = 1 -\Pr\left\{|M_n-\E[S]|\le s\right\}               \\
& = 1-\Pr\left\{|S-\E[S]|\le s\right\}^n                \\
& = 1-\left(1-\Pr\left\{|S-\E[S]|\ge s\right\}\right)^n \\
& \le 1-(1-C^p/s^p)^n.
\end{align*}

To derive the second stated result, apply the first result with
\(t=sn^{1/p}C>1\), which clearly holds for some \(n\) large enough, to
see that \[
    \Pr\left\{|M_n-\E[S]|\ge t\right\}
    \le 1- \left(1-\frac{C^p}{(Csn^{1/p})^p}\right)^n \\
    = 1-\left(1-s^{-p}/n\right)^n.
\] Now, recall that, for \(x\ge0\), \[
    \lim_{n\rightarrow\infty}(1-x/n)^n=e^{-x}.
\] Since \(x\mapsto e^{-x}\) is bounded for all \(x\ge0\), it follows
that \[
    \sup_{s:s>C}\left|\Pr\left\{|M_n-\E[S]|\ge t\right\}-e^{-s^{-p}}\right|=o(1).
\]
\end{proof}

\end{document}